\documentclass[11pt]{article}
\usepackage[T1]{fontenc}
\usepackage{graphicx}
\usepackage{fullpage}

\usepackage{amsthm,amssymb,amsmath}
\usepackage{latexsym}

\usepackage{color}

\usepackage{mathrsfs}

\usepackage{hyperref}
\usepackage{color}

\usepackage{url}
\usepackage{pstricks}
\usepackage{soul}
\usepackage{subcaption}
\usepackage{tikz,ifthen,calc}
\usepackage{stmaryrd}

\usepackage{mdwlist}
\usepackage{paralist}
\usepackage{xspace}

\usepackage{bbding}

\usepackage{cleveref}

\usetikzlibrary{positioning, fit}
\usetikzlibrary {arrows.meta}
\usetikzlibrary{positioning}
\usetikzlibrary{shapes}
\usetikzlibrary{shapes.symbols,patterns}
\usetikzlibrary{calc,through,backgrounds}
\usetikzlibrary{decorations.pathreplacing,calligraphy}
\usetikzlibrary{plotmarks}

\makeatletter
\newcommand{\xMapsto}[2][]{\ext@arrow 0599{\Mapstofill@}{#1}{#2}}
\def\Mapstofill@{\arrowfill@{\Mapstochar\Relbar}\Relbar\Rightarrow}
\makeatother

\newtheorem{lemma}{Lemma}[section]
\newtheorem{theorem}[lemma]{Theorem}

\newtheorem{observation}[lemma]{Observation}

\crefname{observation}{observation}{observations}

\newcommand{\integers}{\mathbb{Z}}

\newcommand{\set}[1]{\left\{ #1 \right\}}

\newcommand{\floor}[1]{\left\lfloor #1 \right\rfloor}

\def\FLOW{\mbox{\sf flow}}

\def\Ghalf{G^{1/2}}
\def\Vhalf{V^{1/2}}
\def\Ehalf{E^{1/2}}

\def\calC{\mathscr{C}}

\newcommand{\DRlong}{\textsc{Degree Realization}\xspace}
\newcommand{\DR}{\textsc{DR}\xspace}

\newcommand{\cliquelong}{\textsc{Maximum Clique}\xspace}
\newcommand{\clique}{\textsc{MC}\xspace}
\newcommand{\cliqueDRlong}{\cliquelong \textsc{Degree Realization}\xspace}
\newcommand{\cliqueDR}{\textsc{MC-DR}\xspace}

\newcommand{\islong}{\textsc{Maximum Independent Set}\xspace}
\newcommand{\is}{\textsc{MIS}\xspace}
\newcommand{\isDRlong}{\textsc{MIS Degree Realization}\xspace}
\newcommand{\isDR}{\textsc{MIS-DR}\xspace}

\newcommand{\vclong}{\textsc{Minimum Vertex Cover}\xspace}
\newcommand{\vc}{\textsc{MVC}\xspace}
\newcommand{\vcDRlong}{\textsc{MVC Degree Realization}\xspace}
\newcommand{\vcDR}{\textsc{MVC-DR}\xspace}

\newcommand{\matchlong}{\textsc{Maximum Matching}\xspace}
\newcommand{\match}{\textsc{MM}\xspace}
\newcommand{\matchDRlong}{\textsc{MM Degree Realization}\xspace}
\newcommand{\matchDR}{\textsc{MM-DR}\xspace}
\newcommand{\domsetlong}{\textsc{Minimum Dominating Set}\xspace}
\newcommand{\domset}{\textsc{MDS}\xspace}
\newcommand{\domsetDRlong}{\textsc{MDS Degree Realization}\xspace}
\newcommand{\domsetDR}{\textsc{MDS-DR}\xspace}

\newcommand{\InvMatch}{\textbf{Invert}\xspace}

\def\inline#1:{\par\vskip 3pt\noindent{\bf #1:}\hskip 10pt}
\def\midinline#1:{\par\noindent{\bf #1:}\hskip 10pt}
\def\dnsinline#1:{\par\vskip -7pt\noindent{\bf #1:}\hskip 10pt}

\long\def\commabs #1\commabsend{}

\begin{document}

\title{\textbf{Degree Realization with Maximum Matching}%
\thanks{An extended abstract was presented at the 37th International Workshop on Combinatorial Algorithms
(IWOCA) 2026. Supported by US-Israel BSF grant 2022205.}}

\author{%
Amotz Bar-Noy%
\thanks{City University of New York (CUNY), USA.} \\
{\tt\small amotz@sci.brooklyn.cuny.edu}
\and
Igor Kalinichev%
\thanks{Weizmann Institute of Science, Israel.} \\
{\tt\small igor.kalinichev@weizmann.ac.il}
\and
David Peleg%
\footnotemark[3] \\
{\tt\small david.peleg@weizmann.ac.il}
\and
Dror Rawitz%
\thanks{Bar Ilan University, Ramat-Gan, Israel.} \\
{\tt\small dror.rawitz@biu.ac.il}
}
\date{October 30, 2026}













\begin{titlepage}

\maketitle

\begin{abstract}
The \matchlong \DRlong
problem requires, for a sequence $d$ of $n$ positive integers, to compute a realizing graph 
(whose degree sequence corresponds to $d$) with the maximum possible matching.
We present a polynomial-time algorithm for this problem. 
The decision version of this problem requires to determine whether a sequence $d$ admits a realization with a matching of 
size $\nu$ (without computing such a realization). Recently, it was shown that it can be done by verifying a series 
of inequalities. We improve the existing characterization by reducing the number of inequalities.
\end{abstract}

\noindent
\textbf{Keywords}:
Degree sequences, Degree realization, Graph algorithms, Maximum matching.

\renewcommand{\thepage}{}
\end{titlepage}
\pagenumbering{arabic}


\section{Introduction}


The \DRlong (\DR) problem requires to decide, for a given non-increasing sequence $d=(d_1,\ldots,d_n)$ of positive integers, if there exists some $n$-vertex (simple undirected) graph $G=(V,E)$ such that $\deg_G(i) = d_i$ for every $i \in \set{1,\ldots,n}$, and to construct such a graph if it exists. 
Such a sequence $d$ is called \emph{graphic}.

Two central questions on realizations concern characterizations (i.e., necessary and sufficient conditions) for a sequence to be graphic, and efficient algorithms for finding realizing graphs.
A necessary and sufficient condition for a given sequence of integers to be
graphic 
(also implying an $O(n)$ decision algorithm) 
was presented by Erd\H{o}s and Gallai~\cite{EG60}.
Havel~\cite{Havel55} and Hakimi~\cite{Hakimi62} 
described an $O(\sum_i d_i)$-time 
algorithm that given a sequence $d$ of integers proves that the given sequence is not graphic or 
computes an $m$-edge graph realizing it, where 
$m ~=~ \frac{1}{2} \sum_{i=1}^n d_i$.

Here, we study the variant of the problem where, given a sequence with possibly many realizing graphs, 
the goal is to find a realizing graph that also optimizes some quality measure. 
Hereafter, we refer to such problems as \emph{optimized realization} problems.
Concretely, we consider the classical \matchlong (\match) problem. 
For a graph $G = (V,E)$, a matching is an edge set $M \subseteq E$, such that $e$ and $e'$ do not share a vertex, 
for every two distinct $e, e' \in M$. The size of the maximum matching in $G$ is denoted by
\[
\match(G) = \max \{|M| \mid M \text{ is a matching in } G\}
~.
\]
Given a graphic sequence $d$, we look for a realizing graph $G$ with the maximum possible matching. 
For example, for $d = (2,2,2,2,2,2)$, a realization $G_1$ consisting of a 6-vertex cycle
is better that a realization $G_2$ consisting of two triangles, since $\match(G_1) =3$ whereas $\match(G_2) = 2$.
The resulting optimized realization problem is formally defined as follows. 
Letting 
\[
\match(d) = \max \{\match(G) \mid G \text{ is a realization of } d\}
~,
\]
the \matchDRlong (\matchDR) problem requires finding a realizing graph $G$ attaining $\match(d)$.

Similarly, one may define optimized realization problems for 
other graph optimization problems, such as \cliquelong, \islong, \vclong, and \domsetlong.
Their optimal values on a given graph $G$ are defined as $\clique(G)$, $\is(G)$, $\vc(G)$, and $\domset(G)$, respectively. 
The corresponding optimized realization problems on degree sequences are named 
\cliqueDRlong (\cliqueDR), \isDRlong (\isDR), \vcDRlong (\vcDR), and \domsetDRlong (\domsetDR), and their optimal values on a given $d$ are denoted by 
$\clique(d)$, $\is(d)$, $\vc(d)$, and $\domset(d)$, respectively.

Interestingly, while the graph versions of most of these optimization problems are NP-hard, 
the optimized realization versions of \clique, \is, \vc, and \domset
have been shown to be polynomial-time solvable.
Rao~\cite{Rao79} gave a characterization of sequences realizable by a graph 
containing a clique $K_\ell$ of size $\ell$. This result was based on 
a phenomenon known as the \emph{prefix lemma}, which for MC says that
if $d$ has a realization containing $K_\ell$, then it has a realization such that 
the $\ell$ vertices of maximum degree induce a clique $K_\ell$.
This result implies a polynomial-time algorithm for \cliqueDR
(see, e.g., \cite{GHR18}).
%
%
Consequently,
it follows readily that polynomial-time algorithms also exist for \isDR and \vcDR.
Note that a prefix lemma also applies to \vc.
This prefix property is satisfied by \domset~{\cite{GHR18}}, 
and it was recently used to obtain a characterization and a realization algorithm
for \domsetDR~\cite{BKPR26-domset}.

As for \matchDR, the prefix property is satisfied also by \match~\cite{GJL99}.
Moreover, due to this property, $\match(d)$ can be computed efficiently using a result by Kundu{~\cite{Kundu73}} 
(see details below).
However, so far it was not known how to exploit this property in order to derive a polynomial-time algorithm for \matchDR.
This problem was handled for some restricted graph families in~\cite{BockRautenbach19}, 
but the general problem was left open.

Recently, Erd\H{o}s et al.~\cite{EKMT24} studied \matchDR. 
Given a sequence $d$ and an integer $\nu$, they presented an Erd\H{o}s-Gallai type characterization, based on a system of $O(n)$ inequalities,
which is satisfied if and only if $d$ has a realization with a matching of size $\nu$.
Applying these characterizations to a given degree sequence, the maximum size of a 
matching can be computed efficiently, although a specific realization is not provided.

\subsection{Our Results}
%
Here, we present a polynomial-time realization algorithm for \matchDR. 
Our algorithm makes use of a stronger version of an existing prefix lemma for general graphs~{\cite{GJL99}}, termed the \emph{inverted prefix lemma,}
established in the current paper. This lemma is analogous to the inverted prefix lemma established in~\cite{BockRautenbach19} for \emph{bipartite} realizations of two sequences 
(or a bi-sequence).
%
We also utilize an algorithm due to Fulkerson, Hoffman and McAndrew~{\cite{FHM1965}} for a special case of the $f$-factor problem (discussed later), 
 which is based on the following three phases:
 \begin{enumerate}[(1)]
\item 
Realization of the bi-sequence $(d,d)$ by a \emph{bipartite} graph $\hat{G}$,
\item Conversion of $\hat{G}$ to a half-integral general (non-bipartite) realization 
$G^\omega$ for $d$, where in a half-intergral realization an edge with weight half 
contributes half to the degree of its end-points.
\item Rounding $G^\omega$ to an integral general realization $G$.
\end{enumerate}

In addition, we show how the described method leads to the characterization recently proved by 
Erd\H{o}s et al.~\cite{EKMT24}. In fact, we show that one of the equations in the characterization is redundant.





\subsection{Related Work}
%
Several papers studied network realization with a given subgraph.
Kundu~\cite{Kundu73} showed that the sequences $d$ and 
the component-wise difference $d - d'$, such that $d'_i \in \set{k,k+1}$ and $d' \le d$ (component-wise), 
are graphic only if there exists a graph $G$ realizing $d$ that has a subgraph $G'$ realizing $d'$. 
This result can be used to decide whether $\match(d) \geq \ell$, for a given $\ell$, by checking 
$d' = (1^{2\ell},0^{n-2\ell})$. Furthermore, $\match(d)$ can be computed efficiently, 
since $\match(d)$ is is the maximum $\ell$, such that $d - (1^{2\ell},0^{n-2\ell})$ is realizable.
This result can also be used to decide whether a given sequence $d$ has a perfect matching by 
assigning $d'_i = 1$, for every $i$.
Kleitman and Wang~\cite{KleitmanW73} gave an algorithm for computing a realization of $d$ that contains a subgraph which realizes $d'$. 
Their algorithm can be used to compute a realization of $d$ that contains a perfect matching, 
if such a realization exists.
Extensions of the above result were presented in~\cite{KleitmanW73,Kundu74}.

Rao and Rao{~\cite{RaoRao72}} and Kundu{~\cite{Kundu73}} gave a characterization of 
sequences that can be realized by a Hamiltonian graph.
Chungphaisan~\cite{Chungphaisan78} gave an algorithm, that given a sequence $d$,
constructs a realization with a Hamiltonian cycle (or path), if such a realization exists.
Rao~\cite{Rao79} characterized sequences that can be realized by a graph containing $K_\ell$ (a clique of size $\ell$). This result was based on the Prefix Lemma for MC.
An alternative proof was given in~\cite{KezdyLehel96}, 
and a constructive proof was presented in~{\cite{Yin12}}.
It follows that \cliqueDR can be solved in polynomial time by 
checking if $d$ has a realization containing $K_\ell$, for 
$\ell \in \set{2,\ldots,n}$.
%

Gould, Jackson and Lehel~\cite{GJL99} extended Rao's result by showing that if $d$ 
has a realization containing $H$ as a subgraph (but not necessarily an induced sub-graph),
then there exists a realization of $d$ containing $H$ where the vertices of $H$ 
have the $|V(H)|$ largest degrees.
Yin~\cite{Yin11} used this to extend Rao's result~\cite{Rao79}  by giving a characterization of graphic sequences that 
contains a split graph $S_{r,s}$ composed of a clique is of size $r$ and an independent set of size $s$. 
(Observe that $S_{r,1}$ is a clique of size $r+1$.)

Gentner, Henning and Rautenbach~\cite{GHR18} 
studied \isDR and \domsetDR. They proved the prefix lemma for \domsetDR and gave a realization algorithm for \domsetDR on sequences with $d_1 = O(1)$, leaving the general case open. They also provided characterizations for \isDR and \domsetDR in forests.
Gentner, Henning and Rautenbach~{\cite{GHR16}} gave characterizations to realizations that minimize the maximum size of an independent set and that maximize the minimum size of a dominating set in forests.
Recently, Bar-Noy et al.~{\cite{BKPR26-domset}} gave a characterization and a realization algorithm 
for \domsetDR in general graphs.

Bock and Rautenbach~\cite{BockRautenbach19} studied \matchDR in trees and bipartite graphs, where the partition is given.
Their result on bipartite graphs is also based on a stronger version of the prefix lemma, which is essentially an \emph{inverted prefix lemma for bipartite graphs}.

Fulkerson, Hoffman and McAndrew~\cite{FHM1965} 
provided a characterization for the realizability of a sequence $d$
with respect to multi-graphs that satisfy the \emph{odd cycle condition} (discussed later).
Building on the result of~\cite{FHM1965}, 
Kundu~\cite{Kundu74a} presented conditions for the factorization of graphs that satisfy the odd cycle condition.
Anstee relied on the technique of~\cite{FHM1965} to obtain an algorithmic proof of the $f$-factor theorem~\cite{Anstee85} and a simplified characterization for the existence of a $(g,f)$-factor for special cases~\cite{Anstee90}.
%
The result of Yin~{\cite{Yin11}}, mentioned above, was also based on~\cite{FHM1965}.
%
The results for \domsetDR~\cite{BKPR26-domset} are also based on this technique.


\section{Inverted Prefix Lemma}

In this section, we establish the inverted prefix lemma.
Recall that Gould, Jacobson and Lehel~{\cite{GJL99}} gave a general prefix lemma that applies 
to any subgraph. In the following lemma we provide the special case of a matching.

\begin{lemma}[Arbitrary Prefix Lemma for \match~\cite{GJL99}]
\label[lemma]{cl: Gould for MM}
If a sequence $d$ has a realization with a matching of size $\nu$, then $d$ has a realization with a matching of the same size such that $2 \nu$ vertices participating in the matching have the highest $2\nu$ degrees in $d$, i.e., $d_1, d_2, \ldots, d_{2\nu}$. 
\end{lemma}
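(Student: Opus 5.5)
We argue by a swap (edge-exchange) argument that moves the matched vertices onto the top of the degree order one at a time. Among all realizations of $d$ with a matching of size $\nu$, pick a pair $(G,M)$ that maximizes the number of indices $i\le 2\nu$ whose vertex is covered by $M$. We may fix vertex labels so that $\deg_G(i)=d_i$ and $d_1\ge\dots\ge d_n$. If $V(M)=\set{1,\ldots,2\nu}$ we are done. Otherwise there exist $u\notin V(M)$ and $w\in V(M)$ with $u<w$, so $d_u\ge d_w$. Let $w'$ be the partner of $w$ in $M$. The goal is to build a realization $G'$ of $d$ and a matching $M'$ of size $\nu$ with $V(M')=V(M)\setminus\set{w}\cup\set{u}$. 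This contradicts the extremal choice, or, if we prefer, lets us iterate, since $\sum_{v\in V(M)} v$ strictly decreases.

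The construction splits into cases on the edge $uw'$. In the easy case $uw'\in E(G)$, we set $M'=M-ww'+uw'$ and $G'=G$. In the remaining case $uw'\notin E(G)$, we want to move the edge $ww'$ to $uw'$ while keeping all degrees fixed. Removing $ww'$ and adding $uw'$ raises $\deg(u)$ by one and lowers $\deg(w)$ by one. To compensate, we need a vertex $x$ with $ux\in E$, $wx\notin E$ and $x\ne w$. We then delete $ux$ and add $wx$. This is a standard 2-switch on the alternating cycle $u\text{-}w'\text{-}w\text{-}x\text{-}u$. The new graph realizes $d$ and contains $uw'$, so $M'=M-ww'+uw'$ is a matching avoiding $w$ and covering $u$. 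We also need $x\ne w'$, but that holds automatically because $uw'\notin E$. Note that $x$ may be matched in $M$; this is harmless, because we only touch the non-matching edge $ux$ and add $wx$, neither of which is in $M'$.

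The main obstacle is guaranteeing that such an $x$ exists, and this is the degree-counting step. Let $N(u)$ and $N(w)$ be the neighborhoods. We have $w'\in N(w)\setminus N(u)$. If $w\in N(u)$, then $u\in N(w)$ as well, and these two memberships offset each other when we compare $N(u)\setminus(N(w)\cup\set{w})$ with $N(w)\setminus(N(u)\cup\set{u})$. In both cases, $d_u\ge d_w$ gives
\[
|N(u)\setminus(N(w)\cup\set{w})|\ \ge\ |N(w)\setminus(N(u)\cup\set{u})|\ \ge 1,
\]
since $w'$ lies in the right-hand set. So a suitable $x$ exists. We will check the subcase $uw\in E$ carefully, confirming that the switch never creates a loop or a parallel edge. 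Finally, the extremal/potential argument terminates with $V(M)=\set{1,\ldots,2\nu}$, which proves \Cref{cl: Gould for MM}.
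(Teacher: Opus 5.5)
Your proof is correct. The paper gives no proof of its own: it states the lemma as the matching case of the general prefix lemma of Gould, Jacobson and Lehel~\cite{GJL99}, which places any subgraph $H$ present in some realization onto the $|V(H)|$ highest-degree vertices.

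You instead give a direct, self-contained exchange argument tailored to matchings. Each $w\in V(M)$ has a single $M$-neighbor $w'$. So handing $w$'s role to an unmatched vertex $u$ with $d_u\ge d_w$ requires rerouting only the edge $ww'$. This is done by one 2-switch on $u$--$w'$--$w$--$x$--$u$. The vertex $x$ exists because $|N(u)\setminus(N(w)\cup\{w\})|-|N(w)\setminus(N(u)\cup\{u\})|=d_u-d_w\ge 0$, whether or not $uw\in E$. This is the same FLIP technique the paper uses in Cases 3 and 4 of \Cref{lem: MM structure}.

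The two routes buy different things. The cited general result covers arbitrary $H$, where $w$ may have many $H$-neighbors that must be rerouted one by one without destroying other edges of $H$. Your route buys simplicity and an explicit procedure of at most $2\nu$ switches. Combined with \Cref{lem: MM structure}, it makes the inverted prefix lemma fully constructive.

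One bookkeeping remark: for your extremal formulation, choose $u\in[2\nu]\setminus V(M)$ and $w\in V(M)\setminus[2\nu]$, which exist by cardinality. An arbitrary pair with $u<w$ works only with your alternative potential $\sum_{v\in V(M)}v$.
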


For our purposes, however, we need a stronger type of prefix lemma, similar to the one used in~\cite{BockRautenbach19} for solving MM-DR over bipartite graphs. Towards proving this stronger claim, we introduce some notation. 
Consider a realization $G = (V, E)$ of a degree sequence $d$, 
with four vertices $x, y, u, v \in V$ such that $(x,u), (y,v) \in E$ and $(x,y), (u, v) \notin E$. 
Then the \emph{FLIP operation} transforms $G$ into a graph $G' = (V,E')$ by replacing the former 
two edges with the latter two, i.e., setting 
\[
E' = E \cup \{(x, y), (u, v)\} \setminus \{(x, u), (y, v)\}
~. 
\]
This operation preserves the degrees of individual vertices, but could lead to non-isomorphic realization. 
It was widely studied and used in different contexts, see~\cite{Fulkerson1960,EKM13,Barrus16}.
Finally, define $[i,j] = \{i,\ldots,j\}$, for $i \leq j$ and 
$[i] = [1,i] = \{1, \ldots, i\}$ for an integer $i$.
\begin{lemma}
\label[lemma]{lem: MM structure}
Let $G = (V, E)$ be a realization of $d$ with $V = \{v_1, \ldots, v_n\}$, 
such that $\deg_G(v_i) = d_i$ for every $i \in [n]$. 
If there is a matching $M  \subseteq E$ covering the first $2\nu$ vertices $v_1,\ldots,v_{2\nu}$, 
then there is a realization $G' = (V, E')$ of $d$ with the "inverted" matching 
\[
M^{inv} = \{(v_i, v_{2\nu-i+1}) \mid i \in [\nu]\}
~,
\]
such that $M^{inv} \subseteq E'$.
\end{lemma}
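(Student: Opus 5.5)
The plan is induction on $i=1,\ldots,\nu$. The invariant is that $G$ realizes $d$, that the edges $(v_j,v_{2\nu-j+1})$ for $j<i$ lie in $E$, and that the current matching $M$ contains these edges and also perfectly matches the remaining ``window'' $S_i=\{v_i,\ldots,v_{2\nu-i+1}\}$. Every step will use only FLIP operations, so the degree sequence is preserved. A flip is allowed to create new edges, but it must never destroy an edge of $M$ that we still use.

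For the step, write $k=2\nu-i+1$. If $v_i$ is already matched to $v_k$ in $M$, we are done. Otherwise let $(v_i,v_a),(v_k,v_b)\in M$ with $a,b\in S_i$ and $i<a,b<k$. The monotonicity of $d$ gives the inequalities $d_i\ge d_a, d_b\ge d_k$, and these are the only numerical facts I plan to use. The goal is to reach a graph where $(v_i,v_k)$ and $(v_a,v_b)$ are both edges. The matching can then be replaced by $M-\{(v_i,v_a),(v_k,v_b)\}+\{(v_i,v_k),(v_a,v_b)\}$, which perfectly matches $S_{i+1}$ and keeps the earlier pairs. I split into cases according to which of $(v_i,v_k)$ and $(v_a,v_b)$ are present.

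\emph{Case A: both are absent.} Apply FLIP to $(v_i,v_a),(v_k,v_b)$, which produces exactly $(v_i,v_k),(v_a,v_b)$.

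\emph{Case B: both are present.} Just swap the matching edges.

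\emph{Case C: $(v_i,v_k)\in E$ but $(v_a,v_b)\notin E$.} Since $v_b\in N(v_k)\setminus N(v_a)$ and $d_a\ge d_k$, a counting argument on $N(v_a)\setminus\{v_k\}$ versus $N(v_k)\setminus\{v_a\}$ yields a vertex $w\in N(v_a)\setminus N(v_k)$ with $w\neq v_k$. Moreover $w\ne v_i$, because $v_i\in N(v_k)$. We then flip $(v_a,w),(v_k,v_b)$ into $(v_a,v_b),(v_k,w)$. The only matching edge this destroys is $(v_k,v_b)$, which we are discarding anyway. Also $(v_a,w)$ is not in $M$, since $v_a$'s partner is $v_i$. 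After this flip we are in Case B.

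\emph{Case D: $(v_i,v_k)\notin E$ but $(v_a,v_b)\in E$.} This is the case I expect to be the main obstacle. The direct flip is blocked, and we must first create the edge $(v_i,v_k)$ using $d_i\ge d_a$ (or $d_b\ge d_k$) without breaking any edge of $M$. My first attempt is symmetric to Case C, exchanging the roles of $v_i,v_a$ with $v_k,v_b$. Here $v_a$ is a neighbour of $v_i$, and $v_b$ is adjacent to both $v_a$ and $v_k$. I would look for $y\in N(v_i)\setminus N(v_k)$, or $z\in N(v_k)\setminus N(v_i)$, and a partner for a flip that adds $(v_i,v_k)$.

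Alternatively, I would re-route: use the alternating cycle $v_i v_a v_b v_k$ to consider the matching $(v_i,v_b),(v_a,v_k)$ instead. When the needed edges exist, this reduces to Case A or Case C with $a$ and $b$ interchanged. The care needed is to ensure that the vertex produced by the counting argument is not $v_a$, $v_b$, or the partner of some matching edge we must keep. If it is, I would extract a different witness using the strict degree surplus coming from the edge $(v_a,v_b)$. An extremal alternative would also work: choose $G$ and $M$ maximizing the number of correctly placed inverted pairs, then derive a contradiction in Case D.
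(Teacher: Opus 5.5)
Your Cases A, B and C are the paper's Cases 2, 1 and 3, and your induction matches the paper's induction on $f(M)=\min I(M)$. The gap is Case D, which you leave open, and none of the routes you sketch closes it:
\begin{itemize}
\item The comparisons you name ($d_i\ge d_a$, $d_b\ge d_k$, or $N(v_i)$ against $N(v_k)$) do not yield a flip that creates $(v_i,v_k)$.
\item Re-routing through $(v_i,v_b),(v_a,v_k)$ needs two edges that need not exist.
\item ``Take an extremal counterexample and derive a contradiction in Case D'' only restates the task.
\item The symmetry you propose (exchanging $v_i\leftrightarrow v_k$, $v_a\leftrightarrow v_b$) is the wrong one. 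It again targets $(v_a,v_b)$ and would need $d_b\ge d_i$.
\end{itemize}
As written, the induction step is therefore not proved.

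The missing idea is to compare $v_i$ with $v_b$, the $M$-partner of $v_k$, using $d_i\ge d_b$; this holds since $i<b$.
\begin{itemize}
\item \emph{Finding the witness.} In Case D, $v_k\in N(v_b)\setminus N(v_i)$. Apply your Case C counting to $N(v_i)\setminus\{v_b\}$ versus $N(v_b)\setminus\{v_i\}$. This gives $z\in N(v_i)$ with $z\neq v_b$ and $(v_b,z)\notin E$.
\item \emph{Properties of $z$.} $z\neq v_a$ because $(v_a,v_b)\in E$. $z\neq v_k$ because $(v_i,v_k)\notin E$. $(v_i,z)\notin M$ because $v_i$'s partner is $v_a$.
\item \emph{The flip.} Flip $(v_i,z),(v_k,v_b)$ into $(v_i,v_k),(v_b,z)$. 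This destroys only the matching edge $(v_k,v_b)$, which you discard anyway. It keeps $(v_i,v_a)$, $(v_a,v_b)$ and all earlier inverted pairs, since $v_i$ and $v_k$ lie in the current window. You are then in Case B.
\end{itemize}
This is exactly the paper's Case 4. It is the mirror image of your Case C under $v_a\leftrightarrow v_i$, $v_b\leftrightarrow v_k$, and the degree inequality that this mirror requires does hold.
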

\begin{proof}
Call the pair $(G,M)$ \emph{valid} if $G=(V,E)$ is defined on $V = \{v_1, \ldots, v_n\}$ with $\deg(v_i) = d_i$, $M\subseteq E$, and $M$ is on $\{v_1, v_2, \ldots, v_{2\nu}\}$. 
Consider a valid pair $(G, M)$ and let 
\[
I(M) = \set{ i \mid (v_i, v_{2\nu-i+1}) \notin M, ~i \in [\nu]}
\]
and $f(M) = \min I(M)$. 
If $I(M)=\emptyset$, then $M=M^{inv}$, and we are done.
%
    
We describe procedure \InvMatch that given a valid pair $(G,M)$ violating the claim produces a new valid pair $(G', M')$, such that either $(G',M')$ satisfies the claim or $f(M') > f(M)$. Repeatedly applying \InvMatch
gradually modifies $(G,M)$ while preserving validity and strictly increasing $f(M)$. Since $f$ is bounded above by $\nu$, the process terminates in at most $\nu$ steps and the obtained pair $(G', M')$ satisfies the claim. 

Given a pair $(G,M)$ not satisfying the claim, let $i = f(M)$ and $j = 2\nu -i +1$. 
Since $(v_i,v_j) \notin M$, there are edges $(v_i,v_x), (v_y,v_j) \in M$, and the involved vertices satisfy $\deg_G(v_i) \ge \deg_G(v_y)$ and $\deg_G(v_j) \le \deg_G(v_x)$ by the definition of $f$. 
Let 
\[
M' = M \cup \{(v_i,v_j), (v_x, v_y)\}\setminus \{(v_i,v_x), (v_y,v_j)\}
~. 
\]
If there is a realization $G'$ such that $M' \subseteq G'$, then the pair $(G', M')$ is valid and either satisfies the claim or increases $f$. Proceed according to one of the following cases to obtain $G'$. See \Cref{fig: MM prefix cases}.
%
\begin{enumerate}[{Case} 1:]
\item 
$(v_i, v_j), (v_x, v_y) \in E(G)$. 

Then simply take $G' = G$.

\item $(v_i, v_j), (v_x, v_y) \notin E(G)$. 

Then perform a FLIP operation on $G$ replacing $(v_i,v_x), (v_y,v_j)$ with $(v_i, v_j), (v_x, v_y)$. $G'$ obtained in this way satisfies $M' \subseteq E(G')$.

\item $(v_i, v_j) \in E(G)$ and $(v_x, v_y) \notin E(G)$. 

Then there is a vertex $v_z \in V(G) \setminus \{v_i,v_j,v_x,v_y\}$, such that $(v_x, v_z) \in E(G)$ and $(v_j, v_z) \notin E(G)$. 
Indeed, $\deg_G(v_x) \ge \deg_G(v_j)$ and $v_y$ is connected to $v_j$, but not to $v_x$, so there exists $v_z$ compensating for this. 
Obtain $G'$ by doing a FLIP replacing $(v_x, v_z), (v_j,v_y)$ with $(v_x,v_y), (v_j, v_z)$.

\item $(v_i, v_j) \notin E(G)$ and $(v_x, v_y) \in E(G)$.

Then there is a vertex $v_z \in V(G) \setminus \{v_i,v_j,v_x,v_y\}$, such that $(v_i, v_z) \in E(G)$ and $(v_y, v_z) \notin E(G)$. 
Similarly to the previous case, this is because $\deg_G(v_i) \ge \deg_G(v_y)$. 
Obtain $G'$ by doing a FLIP replacing $(v_i, v_z), (v_j,v_y)$ with $(v_i,v_j), (v_y, v_z)$.
\end{enumerate}
The lemma follows.
\end{proof}

\begin{figure}[t]
\centering
\begin{tabular}{c@{\hspace{5pt}}c@{\hspace{5pt}}c}
\begin{subfigure}{.305\textwidth}
    \centering
    \resizebox{\textwidth}{!}{
    \tikzset{
    myptr/.style={-{Stealth[scale=1.5]}},
    }
    \begin{tikzpicture}[
        mynode/.style={circle, draw=black, minimum size=20pt, inner sep=0pt}
    ]
    
    \node[mynode] (i) {$v_i$};
    \node[mynode, above right = 1cm of i] (x) {$v_x$};
    \node[mynode, right =1cm of x] (y) {$v_y$};
    \node[mynode, below right = 1cm of y] (j) {$v_j$};
    \node[right = 0.4cm of x] (sup) {};
    
    \draw[dashed] (x) -- (y);
    \draw[dashed] (i) to[out=-45, in=-135] (j);
    \draw (i) -- (x);
    \draw (j) -- (y) ;
    
    \node[below=2.5cm of sup] (m1) {};
    \node[below=0.5cm of m1] (m2) {};
    \draw[->]        (m1)   -- (m2) node[pos=0.5, right=0.2cm] {\small FLIP};
    
    \node[mynode, below = 3.5cm of i] (i1) {$v_i$};
    \node[mynode, above right = 1cm of i1] (x1) {$v_x$};
    \node[mynode, right =1cm of x1] (y1) {$v_y$};
    \node[mynode, below right = 1cm of y1] (j1) {$v_j$};
    
    \draw (x1) -- (y1);
    \draw (i1)  to[out=-45, in=-135] (j1);
    \draw[dashed] (i1) -- (x1);
    \draw[dashed] (j1) -- (y1);
    
    \end{tikzpicture}
    }
    \caption{Case 2.}
    \label{fig: MM prefix 2}
\end{subfigure}
&
\begin{subfigure}{.32\textwidth}
    \centering
    \resizebox{\textwidth}{!}{
    \tikzset{
    myptr/.style={-{Stealth[scale=1.5]}},
    }
    
    \definecolor{b}{rgb}{0.0, 0, 1}
    \definecolor{gr}{rgb}{0.7, 0.7, 0.7}
    
    \begin{tikzpicture}[
        mynode/.style={circle, draw=black, minimum size=20pt, inner sep=0pt}
    ]

    \node[mynode] (i) {$v_i$};
    \node[mynode, above right = 1cm of i] (x) {$v_x$};
    \node[mynode, right =1cm of x] (y) {$v_y$};
    \node[mynode, below right = 1cm of y] (j) {$v_j$};
    \node[below = 0.2cm of x] (sup) {};
    \node[mynode, below right = 1cm of x] (z) {$v_z$};
    \node[right = 0.4cm of x] (sup) {};
    
    \draw[dashed] (x) -- (y);
    \draw (i) to[out=-45,in=-135] (j);
    \draw (i) -- (x);
    \draw (j) -- (y);
    \draw (x) -- (z);
    \draw[dashed] (z) -- (j);

    \node[below=2.5cm of sup] (m1) {};
    \node[below=0.5cm of m1] (m2) {};
    \draw[->]  (m1)   -- (m2) node[pos=0.5, right=0.2cm] {\small FLIP};
    
    \node[mynode, below = 3.5cm of i] (i1) {$v_i$};
    \node[mynode, above right = 1cm of i1] (x1) {$v_x$};
    \node[mynode, right =1cm of x1] (y1) {$v_y$};
    \node[mynode, below right = 1cm of y1] (j1) {$v_j$};
    \node[mynode, below right = 1cm of x1] (z1) {$v_z$};
    
    \draw (x1) -- (y1);
    \draw (i1) to[out=-45,in=-135] (j1);
    \draw (i1) -- (x1);
    \draw[dashed] (j1) -- (y1);
    \draw[dashed] (x1) -- (z1);
    \draw (z1) -- (j1);
    
    \end{tikzpicture}
    }
    \caption{Case 3.}
    \label{fig: MM prefix 3}
    \end{subfigure}
&
    \begin{subfigure}{.32\textwidth}
    \centering
    \resizebox{\textwidth}{!}{
    \tikzset{
    myptr/.style={-{Stealth[scale=1.5]}},
    }
    
    \definecolor{b}{rgb}{0.0, 0, 1}
    \definecolor{gr}{rgb}{0.7, 0.7, 0.7}
    
    \begin{tikzpicture}[
        mynode/.style={circle, draw=black, minimum size=20pt, inner sep=0pt}
    ]

    \node[mynode] (i) {$v_i$};
    \node[mynode, above right = 1cm of i] (x) {$v_x$};
    \node[mynode, right =1cm of x] (y) {$v_y$};
    \node[mynode, below right = 1cm of y] (j) {$v_j$};
    \node[right = 0.2cm of x] (sup) {};
    \node[mynode, below left = 1cm of y] (z) {$v_z$};
    \node[right = 0.4cm of x] (sup) {};
    
    \draw (x) -- (y);
    \draw[dashed] (i) to[out=-45,in=-135] (j);
    \draw (i) -- (x);
    \draw (j) -- (y);
    \draw[dashed] (y) -- (z);
    \draw (i) -- (z);

    \node[below=2.5cm of sup] (m1) {};
    \node[below=0.5cm of m1] (m2) {};
    \draw[->]        (m1)   -- (m2) node[pos=0.5, right=0.2cm] {\small FLIP};
    
    \node[mynode, below = 3.5cm of i] (i1) {$v_i$};
    \node[mynode, above right = 1cm of i1] (x1) {$v_x$};
    \node[mynode, right =1cm of x1] (y1) {$v_y$};
    \node[mynode, below right = 1cm of y1] (j1) {$v_j$};
    \node[mynode, below left = 1cm of y1] (z1) {$v_z$};
    
    \draw (x1) -- (y1);
    \draw (i1) to[out=-45,in=-135] (j1);
    \draw (i1) -- (x1);
    \draw[dashed] (j1) -- (y1);
    \draw (y1) -- (z1);
    \draw[dashed] (i1) -- (z1);
    
    \end{tikzpicture}
    }
    \caption{Case 4.}
    \label{fig: MM prefix 4}
    \end{subfigure}
\end{tabular}
\caption{Illustration of the cases in the proof of \Cref{lem: MM structure}.
All the illustrating figures maintain the convention that 
solid lines represent edges existing in the graph and dashed lines represent edges not in the graph.
}
\label{fig: MM prefix cases}
\end{figure}

Combining \Cref{cl: Gould for MM} and \Cref{lem: MM structure} gives a more specific Prefix Lemma for \matchlong.

\begin{lemma}[Inverted Prefix Lemma for \match]
\label[lemma]{lem: MM prefix}
If a sequence $d$ has a realization $G = (V, E)$ where $V = \{v_1,\ldots, v_n\}$ such that $\deg_G(v_i) = d_i$ for every $i \in [n]$,
with a matching $M$ of size $\nu$, then $d$ has a realization $G' = (V, E')$ with the 
matching $M' = \set{(v_i, v_{2\nu-i+1}) \mid i \in [\nu]}$.
%
\end{lemma}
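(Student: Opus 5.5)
The plan is to chain the two preceding results, with one bookkeeping step about vertex labels in between. Start from the given realization $G$ with a matching $M$ of size $\nu$. By \Cref{cl: Gould for MM}, $d$ also has a realization with a matching of size $\nu$ whose $2\nu$ matched vertices carry the degrees $d_1,\ldots,d_{2\nu}$. This is the step where care is needed. \Cref{cl: Gould for MM} speaks of degree values, while \Cref{lem: MM structure} needs a graph on the labelled vertex set $V=\{v_1,\ldots,v_n\}$ with $\deg(v_i)=d_i$ in which the matching covers exactly $v_1,\ldots,v_{2\nu}$.

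To get this, let $H$ be the realization from \Cref{cl: Gould for MM}. Order its vertices as $w_1,\ldots,w_n$ so that $\deg_H(w_i)=d_i$ and the matched vertices are $w_1,\ldots,w_{2\nu}$. Such an ordering exists because the matched vertices have degrees $d_1,\ldots,d_{2\nu}$ as a multiset. When several vertices share a degree value, assign the matched ones to the smaller indices; this is still consistent with the degrees, since equal entries of $d$ are interchangeable. Now relabel via the bijection $w_i\mapsto v_i$. This gives a realization $G_1=(V,E_1)$ with $\deg_{G_1}(v_i)=d_i$ for all $i\in[n]$, together with a matching $M_1\subseteq E_1$ covering $v_1,\ldots,v_{2\nu}$.

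Then apply \Cref{lem: MM structure} to $G_1$ and $M_1$. It yields a realization $G'=(V,E')$ of $d$ with $M^{inv}=\{(v_i,v_{2\nu-i+1})\mid i\in[\nu]\}\subseteq E'$. This is exactly the required matching $M'$. Finally, note that $M'$ is a matching of size $\nu$: its pairs $\{i,2\nu-i+1\}$ for $i\in[\nu]$ are disjoint and together cover $[2\nu]$.

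No step here is a real obstacle. The only point needing attention is the tie-breaking in the relabelling, which keeps the matched vertices on the indices $1,\ldots,2\nu$ even when $d_{2\nu}=d_{2\nu+1}$. Once that is fixed, the statement follows directly from the two earlier lemmas.
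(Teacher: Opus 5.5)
Your proposal is correct and takes the same approach as the paper, which obtains this lemma by combining \Cref{cl: Gould for MM} with \Cref{lem: MM structure}. Your relabelling step, with ties broken in favour of matched vertices, makes explicit a labelling detail that the paper leaves implicit.
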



\section{Algorithm for Realization with Maximum Matching}

Fulkerson, Hoffman and McAndrew~\cite{FHM1965} presented an algorithm for a special case of the $f$-factor problem.
More specifically, they considered the following degree realization problem.
Let $d$ be a sequence, and let $c \in (\integers_+)^{n \times n}$ be a symmetric integral matrix.
Intuitively, one can think of $c$ as representing a multigraph $K_c$ over $V = \set{1, \ldots, n}$ with $c_{i,j}$ parallel edges between $i$ and $j$ for every $1\le i,j\le n$.
Fulkerson et al.~\cite{FHM1965} studied 
the problem of finding a multigraph $H = (V,E)$, such that $\deg_H(i) = d_i$ and the number of parallel edges between $i$ and $j$ is bounded above by $c_{i,j}$.
Intuitively, the sought realizing multigraph $H$ needs to be a sub-multigraph, or factor, of $K_c$.
We refer to such a realization as a \emph{realization of $d$ with respect to $c$}.

Let $G_c = (V,E_c)$ be a graph derived from $c$, where 
\[
E_c = \set{ (i,j) \mid i,j \in V \text{ and } c_{i,j} > 0}
~.
\]
That is, $G_c$ is obtained from $K_c$ by replacing all multiedges by edges.

The matrix $c$ is said to satisfy the \emph{odd-cycle} condition if and only if the distance between any two disjoint odd cycles in the derived graph $G_c$ is at most one.
In other words, any two disjoint odd cycles in $G_c$ must be connected by an edge.

Fulkerson et al.~\cite{FHM1965} presented a characterization for the realizability of a 
sequence $d$ with respect to $c$ that satisfies the odd-cycle condition.
Their proof is constructive in the sense that it implies a realization algorithm for instances,
where $c$ satisfies the odd-cycle condition. 
The details of this algorithm for the specific context of realization with maximum matching are given in \Cref{s:MM-DR alg}.

\begin{theorem}[\cite{FHM1965}]
\label[theorem]{thm:fulkerson_alg}
Let $d$ be a sequence of size $n$, and let $c \in (\integers_+)^{n \times n}$ be a symmetric integral matrix.
There exists a polynomial-time algorithm that computes a realization of $d$ with respect to $c$ or determine that there is no such realization, assuming that $c$ satisfies the odd-cycle condition.
\end{theorem}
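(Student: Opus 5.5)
The plan is the three-phase scheme of Fulkerson, Hoffman and McAndrew sketched in the introduction. Phase (1) is a bipartite flow computation, phase (2) averages the bipartite solution into a half-integral one, and phase (3) rounds away the half-edges; the odd-cycle condition is used only in phase (3). We work with an instance $(d,c)$ on $V=[n]$ and allow loops ($c_{i,i}$) with the usual convention that a loop contributes $2$ to the degree. (For simple-graph applications we simply set $c_{i,i}=0$.)

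\textbf{Phase 1: bipartite relaxation.}
Build the bipartite multigraph $\hat K$ with sides $V'=\{i'\}$ and $V''=\{i''\}$, and capacity $c_{i,j}$ on the pair $(i',j'')$. Ask for a bipartite realization $\hat G$ of the bi-sequence $(d,d)$ within these capacities. This is a transportation (max-flow) problem: source to $i'$ with capacity $d_i$, $i'$ to $j''$ with capacity $c_{i,j}$, and $j''$ to sink with capacity $d_j$. It is solvable in polynomial time, and integrality of max flow gives an integral $\hat G$ whenever a saturating flow exists.
\begin{itemize}
\item If no saturating flow exists, there is no realization w.r.t.\ $c$. Indeed, any realization $H$ of $d$ w.r.t.\ $c$ yields a saturating flow by orienting each edge $\{i,j\}$ both ways, i.e., putting flow on $(i',j'')$ and on $(j',i'')$.
\item In that case we output ``no'' and stop.
\end{itemize}

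\textbf{Phase 2: half-integral symmetrization.}
Define $x_{i,j}=(\hat f_{i,j}+\hat f_{j,i})/2$. Then $x$ is symmetric, satisfies $0\le x_{i,j}\le c_{i,j}$, and $\sum_j x_{i,j}=d_i$. So $x$ is a half-integral realization $G^\omega$. The fractional edges, those with $x_{i,j}\in \mathbb{Z}+\tfrac12$, form a subgraph $F$ of $G_c$ in which every vertex has even fractional degree. Hence $F$ decomposes into cycles.

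\textbf{Phase 3: rounding.}
We repeatedly shrink $F$, processing cycles as follows.
\begin{itemize}
\item \textbf{Even cycle in $F$.} Alternately add $+\tfrac12$ and $-\tfrac12$ around it. Degrees are preserved, capacities still hold since $x_{i,j}\pm\tfrac12$ stays in $[0,c_{i,j}]$, and all its edges become integral.
\item \textbf{Two vertex-disjoint odd cycles $C_1,C_2$ in $F$.} By the odd-cycle condition there is an edge $e=(u,w)\in E_c$ with $u\in C_1$ and $w\in C_2$. Alternate $\pm\tfrac12$ around $C_1$ starting and ending at $u$, so $u$ receives a net change of $\pm1$. Do the same around $C_2$ at $w$. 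Compensate with $\mp1$ on $e$. Choose the signs so that $e$ moves in a feasible direction: $e$ has either $x_e<c_e$ or $x_e>0$.
\item \textbf{Odd cycles sharing a vertex.} Their union contains an even closed walk, which is handled like the even-cycle case.
\end{itemize}
Each step makes at least one fractional edge integral, so there are polynomially many steps.

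After all reductions, at most one odd cycle $C$ of fractional edges remains. The degree sum $\sum_i d_i$ must be even, since otherwise no realization exists and we answer ``no''. We then argue that a single odd cycle contradicts integrality of degrees: a vertex on $C$ meets exactly two half-edges, so its degree stays integral. Hence parity alone does not suffice, and this step needs a genuine obstruction argument.

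\textbf{Main obstacle.}
I expect the main obstacle to be exactly this lone-odd-cycle situation together with the feasible-direction issue on the connecting edge $e$: if $x_e=c_e$ or $x_e=0$ forces the wrong sign. To handle the sign issue, I would first try choosing the starting signs on $C_1$ and $C_2$ so that both equal $-$, giving net change $-1$ at $u$ and at $w$; then $e$ gets $+1$, which is needed only if $x_e\le c_e-1$. Otherwise I would pick the opposite signs, giving $e$ the value $-1$, which needs $x_e\ge1$. If $c_e=1$, one of these is available unless $e$ itself is fractional, and in that case we reroute through $e$.

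For a remaining single odd cycle, I would try to show that the sum of $x$ over $C$ forces a contradiction with an FHM-type counting inequality. If it persists, one can attempt to reroute through an integral edge to an even closed walk. Failing that, the algorithm reports non-realizability, justified by the FHM characterization's necessary condition.
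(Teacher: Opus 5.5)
Your three phases match the paper's scheme (bipartite flow for $(d,d)$, averaging to a half-integral $x$, rounding the half-edges), but the rounding phase is not finished. The step you leave open is exactly where an argument is needed.

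\textbf{The lone odd cycle.} You say a single odd cycle of half-edges may survive, that ``parity alone does not suffice,'' and that in this case the algorithm should answer ``no'' by appeal to the FHM characterization. This is wrong, and the fallback is circular. The relevant parity is global, not local. Summing over unordered pairs gives $\sum_e x_e=\frac12\sum_i d_i$, which is an integer once you have checked that $\sum_i d_i$ is even. Each half-integral edge contributes $\frac12$ modulo $1$, so the number $|F|$ of half-integral edges is even. Each of your operations makes an even number of half-edges integral and leaves every other half-edge untouched: an even cycle; two odd cycles joined through an integral edge $e$; two odd cycles glued at a common vertex. Hence $|F|$ stays even throughout, and a single odd cycle can never remain. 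This is exactly the content of \Cref{obs: even number of odd cycles for MM}. Without it your algorithm has no correctness proof. Its ``no'' answer in the lone-odd-cycle case is unjustified, and invoking the FHM characterization presupposes what the constructive argument is meant to establish.

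\textbf{The sign on $e$.} The problem you single out as the main obstacle disappears once the cycles are organized properly. The paper takes one Euler circuit per connected component of $F$, so two odd circuits lie in different components. Each contains an odd simple cycle, so the odd-cycle condition gives an edge $e=(u,w)\in E_c$ between them. This $e$ cannot be half-integral, since it would then join the two components (\Cref{obs: right x and y for MM}).

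In your cycle-by-cycle version the same holds if you first glue all odd cycles that share a vertex. Once the remaining odd cycles $C_1,C_2,\ldots$ are pairwise vertex-disjoint, suppose $e=(u,w)$ with $u\in C_1$, $w\in C_2$ were half-integral. Then $e$ lies on some remaining cycle through $u$, which must be $C_1$, contradicting $w\in C_2$.

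With $x_e$ an integer and $c_e\ge 1$, you just choose the starting sign of the alternations so that $x_e$ becomes $x_e+1$ if $x_e=0$ and $x_e-1$ otherwise. No rerouting is needed.

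\textbf{Loops (minor).} Your loop convention is inconsistent. A loop contributing $2$ to the degree needs capacity $2c_{i,i}$ on $(i',i'')$ and weight $x_{i,i}=\hat f_{i,i}/2$ counted twice in the degree. This is irrelevant for the paper's instance, where $c_{i,i}=0$.
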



We show that we can use \Cref{thm:fulkerson_alg} to obtain a realization algorithm 
for \matchDR with a matching of size $\nu = \match(d)$.

Given a sequence $d$, our first step is to compute $\match(d)$ 
using the Erd\H{o}s-Gallai type characterization of Erd\H{o}s et al.~\cite{EKMT24}.

Next, Based on the Inverted Prefix Lemma (\Cref{lem: MM prefix}), we construct an instance $d'$ and $c'$
as follows:
\begin{align*}
d'_i & = 
\begin{cases}
d_i - 1 & i \leq 2\nu, \\
d_i     & i > 2\nu.
\end{cases}
& \text{and} &&
c_{i,j}' & =
\begin{cases}
0 & i = j, \\
0 & i + j = 2\nu+1 \\
1 & \text{otherwise}.
\end{cases}
\end{align*}

\begin{observation}
There is a realization of $d$ with a matching of size $\nu$
if and only if there is a realization of $d'$ with respect to $c'$.
\end{observation}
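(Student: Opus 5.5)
The plan is to prove the two directions separately, with the Inverted Prefix Lemma (\Cref{lem: MM prefix}) doing the real work in the forward direction. Throughout, identify the vertex set with $[n]$ so that vertex $i$ has degree $d_i$. Write
\[
M^{inv}=\set{(i,2\nu-i+1)\mid i\in[\nu]}.
\]
Its edges are exactly the pairs $(i,j)$ with $i+j=2\nu+1$ and $i\neq j$; these are precisely the off-diagonal pairs on which $c'$ vanishes. Note also that $2\nu-i+1\neq i$ for every integer $i$, so $M^{inv}$ is a perfect matching on $[2\nu]$.

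\emph{($\Rightarrow$)} Suppose $d$ has a realization with a matching of size $\nu$.
\begin{itemize}
\item By \Cref{lem: MM prefix}, there is a realization $G'=(V,E')$ of $d$ with $M^{inv}\subseteq E'$.
\item Let $H=(V,E'\setminus M^{inv})$. Each vertex $i\le 2\nu$ is covered by exactly one edge of $M^{inv}$, so $\deg_H(i)=d_i-1=d'_i$. Each vertex $i>2\nu$ is untouched, so $\deg_H(i)=d_i=d'_i$.
\item $H$ is simple, so it has no loops and at most one edge per pair; this respects $c'_{i,j}\le 1$ and $c'_{i,i}=0$.
\item $H$ contains no pair with $i+j=2\nu+1$. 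Such a pair lies in $[2\nu]$ and is an edge of $M^{inv}$, all of which were removed.
\end{itemize}
Hence $H$ is a realization of $d'$ with respect to $c'$.

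\emph{($\Leftarrow$)} Let $H$ be a realization of $d'$ with respect to $c'$.
\begin{itemize}
\item Since $c'_{i,j}\in\set{0,1}$ and $c'_{i,i}=0$, $H$ is a simple graph.
\item $H$ avoids every pair with $i+j=2\nu+1$, so $E(H)\cap M^{inv}=\emptyset$. Therefore $G=(V,E(H)\cup M^{inv})$ is still simple.
\item Adding $M^{inv}$ raises each degree in $[2\nu]$ by exactly one and leaves the others unchanged, so $\deg_G(i)=d_i$ for all $i$.
\end{itemize}
Thus $G$ realizes $d$ and contains the matching $M^{inv}$, which has size $\nu$.

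The only nontrivial step is invoking \Cref{lem: MM prefix}. That lemma is stated for a realization with an \emph{arbitrary} matching of size $\nu$, and this generality is exactly what is needed here. The rest is bookkeeping, and the one point to check carefully is that the forbidden pairs $i+j=2\nu+1$ with $i\ne j$ coincide exactly with the edges of $M^{inv}$. This holds because any such pair has $1\le i,j\le 2\nu$.
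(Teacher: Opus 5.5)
Your proof is correct and is essentially the paper's argument. The paper bases this observation on the Inverted Prefix Lemma (\Cref{lem: MM prefix}), and, as in the proof of \Cref{thm: characterization}, it deletes the inverted matching $M^{inv}$ to obtain a realization of $d'$ that avoids the forbidden pairs $i+j=2\nu+1$; the converse adds $M^{inv}$ back, exactly as you do. One small point: your converse tacitly uses $2\nu\le n$, as the paper does, and this holds in the intended setting $\nu=\match(d)$.
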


\begin{observation}[Odd-cycle condition]
\label[observation]{obs:odd-cycle-condition}
$c'$ satisfies the odd-cycle condition.
That is, for any pair of disjoint odd cycles $C_1, C_2$ in $G_{c'}$, there exist $i \in C_1$ and $j \in C_2$, 
such that $(i,j) \in E_{c'}$.
\end{observation}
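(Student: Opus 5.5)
The plan is to use the structure of $G_{c'}$ directly. By definition of $c'$, the graph $G_{c'}$ is the complete graph $K_n$ on $V=[n]$ with the edges of the perfect-on-prefix matching $M^{inv}=\set{(i,2\nu-i+1) \mid i\in[\nu]}$ removed. So for any two vertices $i\neq j$, the pair $(i,j)$ is a non-edge of $G_{c'}$ exactly when $i+j=2\nu+1$. In other words, the complement of $G_{c'}$ is a matching: every vertex has at most one non-neighbor in $G_{c'}$ (its partner $2\nu+1-i$ if $i\le 2\nu$, and none if $i>2\nu$).

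Let $C_1,C_2$ be two vertex-disjoint odd cycles in $G_{c'}$. Any cycle in a simple graph has at least three vertices, so $|C_1|,|C_2|\ge 3$. I will argue by counting non-edges between $V(C_1)$ and $V(C_2)$. Every vertex of $C_1$ is non-adjacent to at most one vertex of $C_2$, since it has at most one non-neighbor overall. Hence the number of non-adjacent pairs $(i,j)$ with $i\in C_1$, $j\in C_2$ is at most $|C_1|$.

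On the other hand, the total number of pairs is $|C_1|\cdot|C_2|\ge 3|C_1|>|C_1|$. So at least one pair $(i,j)$ with $i\in C_1$, $j\in C_2$ is adjacent in $G_{c'}$, that is, $c'_{i,j}=1$ and $(i,j)\in E_{c'}$. This is exactly the odd-cycle condition. Even simpler, it suffices to fix any $i\in C_1$: it has at most one non-neighbor, while $C_2$ has at least three vertices, so $i$ is adjacent to some $j\in C_2$.

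There is no real obstacle here. The only points to check are that $c'_{i,i}=0$ introduces no loops, so cycles in $G_{c'}$ are ordinary simple cycles of length at least three, and that the definition of $c'$ indeed removes only the matching pairs. Both follow immediately from the definition. The disjointness of $C_1$ and $C_2$ ensures $i\ne j$, so the pair is a genuine candidate edge.
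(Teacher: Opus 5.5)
Your proposal is correct and follows the paper's argument: the paper also fixes an arbitrary $i \in C_1$, and if the chosen $j' \in C_2$ happens to be its unique non-neighbor (the partner with $i+j'=2\nu+1$), it picks another vertex of $C_2$. Your counting variant is a symmetric restatement of the same fact, and you spell out the points the paper leaves implicit: the complement of $G_{c'}$ is a matching, cycles have at least three vertices, and disjointness gives $i \neq j$.
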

\begin{proof}
Let $i \in C_1$, $j' \in C_2$ be arbitrary vertices.
If $i + j' = 2\nu + 1$, then let $j \neq j'$ be any other vertex in $C_2$. 
Otherwise, let $j = j'$.
\end{proof}

This implies the following.

\begin{theorem}
\label[theorem]{thm:MM}
There exists a polynomial-time algorithm for constructing a realization of a 
given graphic sequence $d$ that also has a matching of maximum size 
(among all possible realizations of $d$).
\end{theorem}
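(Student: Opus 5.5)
The algorithm has three stages, and correctness follows from the results already established. Given a graphic sequence $d$ (sorted non-increasingly), we first compute $\nu=\match(d)$. We can either use the characterization of Erd\H{o}s et al.~\cite{EKMT24}, or apply Kundu's criterion directly: $\nu$ is the largest $\ell \le \lfloor n/2\rfloor$ such that $d-(1^{2\ell},0^{n-2\ell})$ is graphic. By Erd\H{o}s--Gallai, each test takes polynomial time, and there are at most $n/2$ values of $\ell$ to try.

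Next we form the instance $(d',c')$ defined above. We then run the algorithm of \Cref{thm:fulkerson_alg} on $(d',c')$, which is applicable because $c'$ satisfies the odd-cycle condition (\Cref{obs:odd-cycle-condition}). Since every entry of $c'$ lies in $\set{0,1}$ and the diagonal is $0$, any realization $H$ of $d'$ with respect to $c'$ is a simple graph. Moreover, $H$ contains no pair $\{v_i,v_{2\nu-i+1}\}$ with $i\in[\nu]$. Adding the edges $M^{inv}=\set{(v_i,v_{2\nu-i+1})\mid i\in[\nu]}$ to $H$ therefore gives a simple graph $G$ with $\deg_G(v_i)=d_i$ for all $i$. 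Thus $G$ realizes $d$ and contains a matching of size $\nu=\match(d)$, which is maximum over all realizations.

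The main thing to verify is that \Cref{thm:fulkerson_alg} does not report infeasibility, i.e., the forward direction of the first observation. Since $\nu=\match(d)$, some realization of $d$ has a matching of size $\nu$. By \Cref{cl: Gould for MM} followed by \Cref{lem: MM structure} (together, \Cref{lem: MM prefix}), there is a realization $G'$ of $d$ containing $M^{inv}$ on the $2\nu$ highest-degree vertices. Removing $M^{inv}$ from $G'$ lowers the degree of each of $v_1,\dots,v_{2\nu}$ by exactly one, leaves every edge between a pair with $c'_{i,j}=1$, and uses no loops or forbidden pairs. Hence $G'\setminus M^{inv}$ is a realization of $d'$ with respect to $c'$. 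The converse direction is the adding-back argument given above.

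Finally, the running time is a polynomial number of graphicality tests, one polynomial call to the Fulkerson--Hoffman--McAndrew algorithm, and linear-time post-processing, which proves \Cref{thm:MM}. The only delicate point is making sure that \Cref{lem: MM prefix} is applied to the sorted vertex labelling. This is needed so that the forbidden pairs $i+j=2\nu+1$ in $c'$ match exactly the inverted matching that the lemma guarantees.
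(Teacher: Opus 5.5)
Your proposal is correct and follows essentially the paper's route. You compute $\nu$ (via Kundu's criterion or the Erd\H{o}s et al.\ characterization), build the instance $(d',c')$, and invoke the Fulkerson--Hoffman--McAndrew algorithm, which applies by the odd-cycle observation. You then add back the inverted matching, and the Inverted Prefix Lemma on the sorted labelling guarantees feasibility. You also explicitly prove both directions of the equivalence between realizing $d$ with a matching of size $\nu$ and realizing $d'$ with respect to $c'$; the paper states this observation without proof, so spelling it out is a useful addition.
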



\section{Detailed Realization Algorithm}
\label{s:MM-DR alg}


In this section we present the full details of the realization algorithm that was mentioned in the previous section.
The algorithm proceeds in several steps, presented in the coming subsections.
First, the algorithm computes $\nu = \match(d)$ using 
the Arbitrary Prefix Lemma (\Cref{cl: Gould for MM}) and Kundu's result{~\cite{Kundu73}}.
Next, the \matchDR problem over general graph is reduced to the same problem over bipartite graphs, by applying the realization algorithm of 
Fulkerson et al.~\cite{FHM1965} while ensuring preservation of a matching.
Then, \matchDR over bipartite graphs is reduced to a maximum flow problem. 
To be more specific, given the size $\nu$ of the maximum matching and a degree sequence pair $(d,d)$, 
we construct a bipartite flow graph $G_{d,\nu}$, 
such that if the maximum flow attained in $G_{d,\nu}$ equals $\sum_{i=1}^n d_i - 2\nu$, 
then it corresponds to a realization $\hat{G}$ of $(d,d)$ with a matching of size $2\nu$. 
(This is where we make use of the Inverted Prefix {\Cref{lem: MM prefix}}.)
The next step is to convert $\hat{G}$ to a half integral general (non-bipartite) realization $G^\omega$ for $d$.
Finally, $G^\omega$ is rounded to an integral general realization $G$.



\subsection{Computation of $\match(d)$}


As mentioned before, given a candidate value $\nu$, one can check whether $\match(d) \geq \nu$
by using the result of Kundu~{\cite{Kundu73}} with $d' = (1^{2\nu},0^{n-2\nu})$. That can be done in 
$O(n)$ time using{~\cite{EG60}}. Hence, $\match(d)$ can be computed in $O(n \log n)$ time using binary search.


\subsection{Reduction to a Bipartite Sequence Pair}

%
Given a sequence $d$,
a bipartite graph $\hat{G} = (V, W, \hat{E})$, where $V = \set{v_1, \ldots, v_n}$ and
$W = \set{w_1, \ldots, w_n}$, is a \emph{$\nu$-matched} realization for the bi-sequence,
$(d, d)$ if it satisfies the properties:
\begin{enumerate}[(M1)]
\item $\hat{G}$ realizes the sequence pair $(d,d)$.
\item $(v_i, w_i) \notin \hat{E}$ for every $i \in [n]$.
\item $\hat{M} = \set{(v_i, w_{2\nu-i+1}) \mid i \in [2\nu]} \subseteq \hat{E}$.
\end{enumerate}

First, we describe an algorithm that given a $\nu$-matched realization $\hat{G}$ of $(d,d)$, produces a graph $G$ realizing $d$ with a matching of size $\nu$.
%
%
%

\begin{enumerate}
\item \label{step: 1}
Compute a half-integral solution.
\begin{enumerate}
\item For all $i, j \in [n]$, let 
\begin{align*}
y_{ij} & 
= \begin{cases}
    1, ~ \{v_i, w_j\} \in \hat{E}, \\
    0, ~ \text{otherwise,}
\end{cases}
&
\text{and}
&& 
\omega(i,j) = \frac{1}{2}(y_{ij} + y_{ji}).
\end{align*}

\item 
Define a weighted graph $G^\omega = (V^\omega, E^\omega,\omega)$ with vertex set $V^\omega = [n]$ and 
an edge $e=(i, j)$ of weight $\omega(e)=$ $\omega(i,j)$ for every $i,j \in V^\omega$. 
%
Clearly, $\omega$ is half-intergral.
\item Let $M^\omega= \{(i, 2\nu-i+1) \mid ~i \in [\nu]\}$ be the inverted matching in $G^\omega$. 
According to (M3)
\begin{align}
\label{eq: matching edges}
    \omega(e) = 1, ~\text{for every $e \in M^\omega$}.
\end{align}

\item Define the \emph{weighted degree} of a vertex $i \in V^\omega$ to be 
\[
d^\omega(i) = \sum_{j\in V^\omega} \omega(i, j)
~. 
\]
Note that $G^\omega$ realizes $d$ in the \emph{weighted} sense, namely,
\begin{align*}
d^\omega(i) 
& = \sum_{j\in V^\omega} \omega(i, j) = \frac{1}{2} \left(\sum_{j=1}^n y_{ij} + \sum_{j=1}^n y_{ji}\right) = d_i, 
& \mbox{for any}~ i \in V^\omega.
\end{align*}
\end{enumerate}

\item
\label{step: 2} 
Preparing for discarding non-integral weights while keeping the degrees unchanged.
\\
Construct a graph $\Ghalf = (\Vhalf,\Ehalf)$ obtained by removing from $G^\omega$ the edges of integral weight and keeping only those of weight $1/2$.
Formally, $\Vhalf = V^\omega$ and $\Ehalf = \{e \in E^\omega \mid \omega(e) = 1/2 \}$.

\medskip\noindent
In later stages of the construction, whenever $G^\omega$ is modified (by changing the weight of some edge $e$ from $1/2$ to 0 or 1), $\Ghalf$ is modified accordingly (by removing the edge $e$).

\item
\label{step: 3}
Partition into cycles.
\begin{enumerate}
\item 
Partition the edge set of $\Ghalf$ into disjoint (not necessarily simple) cycles each covering an entire connected component of the graph. Since $\Ghalf$ is an even graph by \Cref{obs: G1/2 is even graph for MM}, this can be done in polynomial number of steps.
\item
Let $\calC$ be a set of cycles in the partition.
\item
Let $\calC^{even} \gets \{C\in\calC \mid C ~\mbox{is of even length}\}$,
~~~~~
$\calC^{odd} \gets \calC \setminus \calC^{even}$.
\end{enumerate}

\commabs
The next two steps describe how to modify $G^\omega$ to obtain a realization $G$ of $d$.
\commabsend

\item 
\label{step: 4}
Eliminate even cycles.

\noindent
For every cycle $C \in \calC^{even}$, do the following.
\begin{enumerate}
\item 
Traverse $C$ starting from an arbitrary vertex $x \in C$ and continuing along the cycle until returning to $x$. Denote  the resulting sequence of edges by $E(C)=(e_1, e_2, \ldots, e_\ell)$. 
\item
Increase (resp., decrease) edge weights on even (resp., odd) positions in the sequence $E(C)$ by $1/2$.
That is, for every $i \in [\ell]$,
$\omega(e_i)$ is set to 1 if $i$ is even and 0 otherwise. 
\commabs
\begin{align*}
\omega(e_i) \gets \begin{cases}
        1, ~ \text{$i$ is even}, \\
        0, ~ \text{$i$ is odd}.
    \end{cases} 
\end{align*}
\commabsend
(This does not change the weighted degrees in $G^\omega$ and does not affect other cycles in $\calC$ or edges in $M^\omega$.)
\end{enumerate}

\item 
\label{step: 5}
Eliminate odd cycles.

\noindent
Arrange the cycles in $\calC^{odd}$ in pairs 
(by \Cref{obs: even number of odd cycles for MM} their number is even).
%
For every pair of cycles $(C,C')$ choose vertices $x \in C$, $y \in C'$ according to \Cref{obs: right x and y for MM}. Then do the following.
\begin{enumerate}
\item 
Starting from $x \in C$, traverse $C$ via edges 
$E(C)=(e_1, e_2, \ldots, e_\ell)$.
\\
Starting from vertex $y \in C'$, traverse $C'$ via edges 
$E(C')=(e'_1, \ldots, e'_k)$.
\item 
Let $\xi = \omega(x,y)$ 
(by \Cref{obs: right x and y for MM}, this weight must be either $0$ or $1$).
\item Set $\omega(x, y) \gets 1-\xi$.
\item For every $i \in [\ell]$ and $j \in [k]$, 
      modify the edge weights in the cycles $C$ and $C'$ as follows:
\begin{align*}
\omega(e_i) & \gets 
    \begin{cases}
        1-\xi, ~ \text{$i$ is even}, \\
        \xi, ~ \text{$i$ is odd},
    \end{cases} 
&
\omega(e'_j) & \gets 
    \begin{cases}
        1-\xi, ~ \text{$j$ is even}, \\
        \xi, ~ \text{$j$ is odd}.
    \end{cases} 
\end{align*}
\end{enumerate}
By \Cref{obs: G^omega realizes d for MM}, at this stage each edge in $G^\omega$ 
has weight 1 or 0, $G^\omega$ realizes $d$, and Eq.~\eqref{eq: matching edges} holds.

\item 
\label{step: 6}
Generate the output $G$.

\noindent
Transform $G^\omega$ into a simple graph $G$, with an edge $(i,j)$ whenever $\omega(i,j) = 1$ in $G^\omega$. Note that $M = M^\omega$ forms a matching in $G$ by Eq.~\eqref{eq: matching edges}.
\end{enumerate}


The algorithm is justified by the following Observations.

\begin{observation} 
\label[observation]{obs: G1/2 is even graph for MM}
The graph $G^{1/2}$ is even
(i.e., all its vertex degrees are even).
\end{observation}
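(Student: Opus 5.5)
The approach is a parity argument on weighted degrees, carried out vertex by vertex. Fix a vertex $i \in \Vhalf$. By Step~\ref{step: 1}, every weight $\omega(i,j) = \frac{1}{2}(y_{ij}+y_{ji})$ lies in $\set{0, \frac{1}{2}, 1}$. Moreover, $\omega(i,j) = \frac{1}{2}$ exactly when $y_{ij}+y_{ji}=1$, that is, when exactly one of $(v_i,w_j)$ and $(v_j,w_i)$ is in $\hat{E}$. We will also use that $\omega(i,i)=0$, because $y_{ii}=0$ by (M2); so no loops arise. It then suffices to show that the number of indices $j$ with $\omega(i,j)=\frac{1}{2}$ is even.

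Write $a_1$ for the number of $j$ with $\omega(i,j)=1$, and $a_{1/2}$ for the number of $j$ with $\omega(i,j)=\frac{1}{2}$. Since $G^\omega$ realizes $d$ in the weighted sense,
\[
d_i = d^\omega(i) = a_1 + \tfrac{1}{2} a_{1/2} .
\]
Since $d_i$ and $a_1$ are integers, $\frac{1}{2}a_{1/2}$ is an integer, so $a_{1/2}$ is even. But $a_{1/2}=\deg_{\Ghalf}(i)$, so every vertex of $\Ghalf$ has even degree.

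Alternatively, one can count directly: $\deg_{\Ghalf}(i) = \sum_j (y_{ij}+y_{ji}) - 2a_1 = 2d_i - 2a_1$. This uses (M1), namely that $v_i$ and $w_i$ both have degree $d_i$ in $\hat{G}$.

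The statement should also hold after the later modifications of Steps~\ref{step: 4} and~\ref{step: 5}. These steps preserve every weighted degree, because along each cycle the changes alternate between $+\frac{1}{2}$ and $-\frac{1}{2}$, and in the odd case the change at $x$ and $y$ is compensated by the change on edge $(x,y)$. They also only turn half-weights into integral weights. Hence the same parity argument applies verbatim at every stage. I expect no real obstacle; the only care needed is the loop issue, which (M2) rules out.
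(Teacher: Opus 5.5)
Your proof is correct and matches the reasoning the paper relies on. The paper states this observation without a separate proof, right after deriving $d^\omega(i)=d_i$ in Step~1(d), and evenness of $G^{1/2}$ follows from exactly your count $d_i=a_1+\tfrac{1}{2}a_{1/2}$. (Minor aside: (M2) is not needed here, since $\omega(i,i)=y_{ii}\in\{0,1\}$ can never equal $\tfrac{1}{2}$.)
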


A connected even graph has an Euler cycle, implying the following.

\begin{observation} 
\label[observation]{obs: even number of odd cycles for MM}
The number of cycles in $\calC^{odd}$ is even.
\end{observation}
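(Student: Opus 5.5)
The plan is to prove that $\calC^{odd}$ has even cardinality by a parity count on the edges of $\Ghalf$, using that every vertex degree in $G^\omega$ is an integer.

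\textbf{Integrality at each vertex.} Fix a vertex $i \in \Vhalf$. By Step~\ref{step: 1}, the weighted degree satisfies $d^\omega(i) = d_i$, which is an integer. Every edge at $i$ has weight $0$, $1/2$ or $1$. Subtracting the integral-weight edges, we get
\[
\tfrac12 \deg_{\Ghalf}(i) \;=\; d_i - \#\{j \mid \omega(i,j)=1\} \in \integers .
\]
Hence $\deg_{\Ghalf}(i)$ is even, which is \Cref{obs: G1/2 is even graph for MM}. The same fact continues to hold after Step~\ref{step: 4}, because each even cycle is processed by an alternating $\pm 1/2$ change that keeps every weighted degree fixed.

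\textbf{Structure of $\calC$.} By Step~\ref{step: 3}, each cycle in $\calC$ is an Euler circuit of an entire connected component of $\Ghalf$. So the lengths of the cycles in $\calC$ are exactly the edge counts of the components of $\Ghalf$. The number of odd cycles is therefore the number of components with an odd number of edges.

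\textbf{The main step: a global parity count.} This is the one place where more than bookkeeping is needed. I would show that $|\Ehalf|$ is even, and then conclude that the number of components with an odd edge count is even, since the component edge counts sum to $|\Ehalf|$. To do this, sum the weighted degrees over all vertices:
\[
\sum_i d_i \;=\; \sum_i d^\omega(i) \;=\; 2\sum_{e} \omega(e) \;=\; 2\,\#\{e \mid \omega(e)=1\} + |\Ehalf| .
\]
This gives $|\Ehalf| \equiv \sum_i d_i \pmod 2$.

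That congruence alone does not give evenness. I first planned to invoke that $d$ is graphic, but that is circular for the weighted graph, because $\sum_i d_i = 2|\hat{E}|/2 = |\hat{E}|$ need not be even merely because $\hat{G}$ exists. So the hypothesis that $d$ is graphic is genuinely required: it gives $\sum_i d_i$ even, and hence $|\Ehalf|$ is even.

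The only issue to check is that Step~\ref{step: 4} preserves this parity. Each even cycle removed from $\Ghalf$ takes away an even number of half-edges, so the parity of the remaining edge count, and with it the count of odd cycles, is unchanged.
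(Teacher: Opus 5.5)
Your proof is correct and follows the paper's: both write $\sum_i d_i = 2\sum_e \omega(e)$, use that $d$ is graphic to make the total weight an integer (the paper phrases this as $\sum_e \omega(e) = m$), conclude that $|\Ehalf|$ is even, and finish because the cycles of $\calC$ partition $\Ehalf$. Your final check on Step~\ref{step: 4} is unnecessary, since $\calC^{odd}$ is fixed in Step~\ref{step: 3} and the parity argument concerns only that partition.
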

\begin{proof}
Observe that
$\sum_{e \in E^\omega} \omega(e) ~=~ \frac{1}{2} \sum_{i=1}^n d_i = m,$
where $m$ is the number of edges, which is an integer.
Therefore, the number of edges with weight $1/2$ must be even. 
Since the cycles in $\calC$ cover all of $\Ehalf$ and are disjoint, the observation follows.
\end{proof}

The following observation is implied by \Cref{obs:odd-cycle-condition}.

\begin{observation}
\label[observation]{obs: right x and y for MM}
For any pair of disjoint odd cycles $C, C' \in \calC^{odd}$, there exist $x \in C$ and $y \in C'$, such that $(x,y) \notin M^\omega$ and $\omega(x, y) \neq 1/2$.
\end{observation}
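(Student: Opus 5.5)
Fix two disjoint odd cycles $C, C' \in \calC^{odd}$. Every edge of $C \cup C'$ lies in $\Ghalf$, so every edge of the underlying support has positive weight in the current $G^\omega$. The plan has three steps: describe which pairs $x,y$ are forbidden, use the size of $C'$ to find a pair that is not in $M^\omega$, and then argue that the weight of that pair is $0$ or $1$.

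\textbf{Which pairs are forbidden.} We need $x \in C$ and $y \in C'$ such that $(x,y) \notin M^\omega$ and $(x,y) \notin \Ehalf$. By Eq.~\eqref{eq: matching edges}, every edge of $M^\omega$ has weight exactly $1$. Step~\ref{step: 4} only modifies edges of even cycles, which are edge-disjoint from $C$ and $C'$, and it never touches $M^\omega$. Earlier iterations of Step~\ref{step: 5} only change edges of other cycles and edges that previously had weight $0$ or $1$; after such a change the weight is again $0$ or $1$. Hence, at the moment the pair $(C,C')$ is processed, the only forbidden choices for $(x,y)$ are:
\begin{itemize}
\item the pairs with $x + y = 2\nu+1$ and $x,y \le 2\nu$, i.e., edges of $M^\omega$ (these have weight $1$, but choosing one would destroy the matching); and
\item the pairs with $\omega(x,y) = 1/2$.
\end{itemize}

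\textbf{Avoiding $M^\omega$.} This is the argument from \Cref{obs:odd-cycle-condition}. Pick any $x \in C$. Since $C'$ is an odd cycle, it has at least three distinct vertices. At most one vertex $y' \in C'$ satisfies $x + y' = 2\nu+1$, so we can choose $y \in C'$ with $(x,y) \notin M^\omega$.

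\textbf{The weight is not $1/2$.} This is the main point. We argue that $\omega(x,y) \neq 1/2$ for every $x \in C$ and $y \in C'$, which finishes the proof. The partition in Step~\ref{step: 3} was into cycles, each covering an entire connected component of $\Ghalf$. Since $C$ and $C'$ are distinct cycles of this partition, they lie in different connected components of $\Ghalf$. Now suppose $\omega(x,y) = 1/2$ at the start of Step~\ref{step: 5}. Then $(x,y)$ would be an edge of $\Ghalf$ joining these two components, which is a contradiction. It remains to rule out edges whose weight has been modified since then. Earlier pair-processings set every edge they touch to $0$ or $1$, so none of them can create a weight of $1/2$.

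The step needing the most care is exactly this bookkeeping: checking that the earlier iterations of Steps~\ref{step: 4}--\ref{step: 5} neither alter the cycles $C, C'$ nor create new half-weights. This follows because each update assigns only values in $\{0,1\}$ and acts only on edges of other cycles or on previously integral cross edges. With that in place, the pair $(x,y)$ chosen above satisfies both requirements.
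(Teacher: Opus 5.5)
Your proof is correct and takes essentially the same approach as the paper's. You pick $x\in C$ arbitrarily and choose $y\in C'$ different from the unique partner $2\nu+1-x$, which is the argument behind the odd-cycle observation. You then rule out $\omega(x,y)=1/2$ because $C$ and $C'$ span distinct connected components of $\Ghalf$; your extra check that earlier eliminations only assign weights in $\{0,1\}$ is a detail the paper leaves implicit.
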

\begin{proof}
%
\Cref{obs:odd-cycle-condition} implies that for any pair of disjoint odd cycles $C, C' \in \calC^{odd}$, there exist $x \in C$ and $y \in C'$, such that $(x,y) \notin M^\omega$.

Assume, towards contradiction, that $\omega(x,y)=1/2$. Then the edge $(x,y)$ belongs to $G^{1/2}$, implying that $x$ and $y$ belong to the same connected component in $G^{1/2}$. But this contradicts the fact that $x\in C$ and $y\in C'$ where $C$ and $C'$ cover different connected components of $\Ghalf$. 
\end{proof}

\begin{observation} 
\label[observation]{obs: G^omega realizes d for MM}
At the end of Step \ref{step: 5}, each edge in 
$G^\omega$ has weight 1 or 0, $G^\omega$ 
realizes $d$ and Eq. \eqref{eq: matching edges} holds.
\end{observation}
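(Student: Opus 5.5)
We have to show three things about the state after Step~\ref{step: 5}: all weights are integral, every weighted degree still equals $d_i$, and Eq.~\eqref{eq: matching edges} still holds. The plan is to track how Steps~\ref{step: 4} and~\ref{step: 5} change the weights, using that the cycles in $\calC$ are edge-disjoint and each covers its own connected component of $\Ghalf$. The invariant to maintain is: after processing any cycle (or pair of cycles), no vertex has changed its weighted degree, the processed edges are integral, and no edge of $M^\omega$ has been touched.

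\emph{Integrality.} Every edge of weight $1/2$ lies in exactly one cycle of $\calC$, since the cycles partition $\Ehalf$. Every cycle is processed exactly once: even cycles in Step~\ref{step: 4}, odd cycles in the pairs of Step~\ref{step: 5}, which exist by \Cref{obs: even number of odd cycles for MM}. Each such edge is therefore reassigned to $0$, $1$, $\xi$ or $1-\xi$ with $\xi\in\{0,1\}$. The only other edge modified is the connecting edge $(x,y)$. By \Cref{obs: right x and y for MM} it has $\omega(x,y)=\xi\in\{0,1\}$ and it is set to $1-\xi$, so it stays integral. Moreover $(x,y)$ is not in $\Ehalf$, so it is not an edge of any cycle and is not overwritten later. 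It remains to check that two different pairs never reuse the same connecting edge $(x,y)$. This holds because each cycle belongs to exactly one pair and $x,y$ lie on cycles of that pair, so distinct pairs give distinct vertex pairs.

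\emph{Degrees.} In an even closed walk $e_1,\dots,e_\ell$, every visit to a vertex (including the start, with $e_\ell,e_1$ paired since $\ell$ is even) uses two consecutive edges of opposite parity. One goes $1/2\to1$ and the other $1/2\to0$, so the net change is zero. For an odd cycle processed from $x$, the interior visits are balanced by the same argument. At $x$, the incident edges are $e_1$ and $e_\ell$, both of odd index, and both become $\xi$, a net change of $2\xi-1$. The same holds at $y$ in $C'$. The edge $(x,y)$ changes by $1-2\xi$, which cancels the change at both $x$ and $y$. Hence $d^\omega(i)=d_i$ is preserved for every vertex. Parallel multiplicities in non-simple cycles need no special care: the computation counts edge-visits, and each edge appears exactly once in a cycle's edge sequence.

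\emph{Matching edges.} Edges of $M^\omega$ have weight $1$ by Eq.~\eqref{eq: matching edges}, so they are not in $\Ehalf$ and no cycle modifies them. The connecting edge $(x,y)$ is chosen outside $M^\omega$ by \Cref{obs: right x and y for MM}, so Eq.~\eqref{eq: matching edges} persists.

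The main obstacle is the odd-cycle bookkeeping at $x$ and $y$. One must check carefully that the endpoint defect $2\xi-1$ has the right sign to be cancelled by flipping $(x,y)$. This depends on the parity convention: $x$ lies between positions $\ell$ and $1$, both odd. If $\xi=1$, both edges at $x$ become $1$, a surplus of $+1$, and removing $(x,y)$ gives $-1$. If $\xi=0$, the deficit $-1$ is offset by adding $(x,y)$. Together these establish \Cref{obs: G^omega realizes d for MM}.
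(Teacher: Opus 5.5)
Your proof is correct and follows the same route as the paper, which only asserts that no weight-$1/2$ edges remain and that the Step~5 modifications preserve weighted degrees, leave other cycles untouched, and avoid $M^\omega$. You supply the details the paper leaves implicit: the parity bookkeeping at $x$ and $y$ (both incident cycle edges have odd index, and flipping $(x,y)$ cancels the $2\xi-1$ defect), and the fact that distinct pairs use distinct connecting edges because the cycles lie in distinct components of $G^{1/2}$.
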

\begin{proof}
After Step \ref{step: 5}, $G^\omega$ contains no more edges of weight $1/2$. Moreover, neither of the modifications performed in Step \ref{step: 5} change the weighted degrees in $G^\omega$, affect other cycles in $\calC$ or edges in $M^\omega$. The observation follows.
\end{proof}

The reduction correctness follows from the next lemma.
\begin{lemma}
\label[lemma]{lem: MM equivalence}
There exists a $\nu$-matched realization of the bi-sequence $(d,d)$ if and only if $\match(d) \ge \nu$.
\end{lemma}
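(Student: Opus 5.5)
Both directions are short, given the Inverted Prefix Lemma (\Cref{lem: MM prefix}) and the half-integral rounding procedure of Steps~\ref{step: 1}--\ref{step: 6}.

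For the direction ``$\match(d)\ge\nu \Rightarrow$ a $\nu$-matched realization exists,'' start from a realization of $d$ with a matching of size $\nu$. By \Cref{lem: MM prefix}, we may assume a realization $G=(V,E)$ with $\deg_G(v_i)=d_i$ that contains the inverted matching $M^{inv}=\{(v_i,v_{2\nu-i+1}) \mid i\in[\nu]\}$. Take the bipartite double cover: $\hat{G}=(V,W,\hat{E})$ with $(v_i,w_j)\in\hat{E}$ if and only if $(v_i,v_j)\in E$. Each $v_i$ and each $w_i$ then has degree $d_i$, which gives (M1). Property (M2) holds because $G$ is simple and has no loops. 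For (M3), let $i\in[2\nu]$. Either $i\le\nu$ or $2\nu-i+1\le\nu$, so the pair $\{i,2\nu-i+1\}$ is an edge of $M^{inv}$. By symmetry of the double cover, both $(v_i,w_{2\nu-i+1})$ and $(v_{2\nu-i+1},w_i)$ lie in $\hat{E}$. Hence $\hat{M}\subseteq\hat{E}$.

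For the converse, suppose $\hat{G}$ is a $\nu$-matched realization of $(d,d)$. Run the procedure of Steps~\ref{step: 1}--\ref{step: 6} on it; its correctness is exactly what the observations preceding the lemma establish.
\begin{itemize}
\item Property (M2) guarantees $\omega(i,i)=0$, so $G^\omega$ has no loops.
\item Property (M3) gives $y_{i,2\nu-i+1}=y_{2\nu-i+1,i}=1$, which yields Eq.~\eqref{eq: matching edges}.
\item The weighted degrees equal $d$.
\item By \Cref{obs: G1/2 is even graph for MM,obs: even number of odd cycles for MM,obs: right x and y for MM,obs: G^omega realizes d for MM}, the cycle eliminations produce a $0/1$ weighting that realizes $d$ and keeps $\omega\equiv 1$ on $M^\omega$.
\end{itemize}
The output $G$ is therefore a simple graph realizing $d$ that contains the matching $M^\omega$ of size $\nu$, so $\match(d)\ge\nu$.

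The main obstacle is \Cref{obs: G1/2 is even graph for MM}, which the rest of the converse relies on. Its proof is a parity argument. Fix a vertex $i$; its weighted degree is $d_i$, an integer. The edges at $i$ of weight $0$ or $1$ contribute an integer to this degree. Hence the edges at $i$ of weight $1/2$, which are exactly the edges of $\Ghalf$ at $i$, must be even in number.

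A second point needs care in the odd-cycle step. Setting $\omega(x,y)\gets 1-\xi$ must not touch a matching edge, which \Cref{obs: right x and y for MM} guarantees. The alternating reassignment must also preserve degrees at $x$ and $y$. At $x$, the odd cycle $C$ has first and last edges $e_1,e_\ell$ in odd positions, each changing by $\xi-\tfrac12$, for a total change of $2\xi-1$. The edge $(x,y)$ changes by $1-2\xi$, which cancels this, and the same holds at $y$.
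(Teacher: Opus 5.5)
Your proposal is correct and follows the paper's proof. One direction takes the bipartite double cover of the inverted-matching realization given by the Inverted Prefix Lemma; the other runs the rounding procedure of Steps~1--6 on the $\nu$-matched realization. Beyond the paper, you prove that $\Ghalf$ is even by parity and check degree preservation at $x,y$ in the odd-cycle step; there, any intermediate passes of the closed walk through $x$ cancel in consecutive odd/even pairs. Both details are correct and are left implicit in the paper.
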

\begin{proof}
If there is a $\nu$-matched realization of $(d,d)$, then the aforementioned algorithm produces a realization of $d$ with a matching of size $\nu$.
    
For the other direction, given a realization $G = (V, E)$ of $d$ with vertices $V = [n]$ and 
a matching $M \subseteq E$ of size $\nu$, there is a realization $G' = (V, E')$ with 
a matching $M' = \{(v_i, v_{2\nu-i+1}) \mid i \in [\nu] \}$ by the Inverted Prefix Lemma (\Cref{lem: MM prefix}). 
Consider the following graph $\hat{G} = (\hat{V}, \hat{W}, \hat{E})$ with $\hat{V} = \{v_1, \ldots, v_n\}$, $\hat{W} = \{w_1, \ldots, w_n\}$ and $(v_i, w_j) \in \hat{E}$ if and only if $(i, j) \in E'$. 
Clearly, $\hat{G}$ realizes $(d,d)$ and $(v_i,w_i) \notin \hat{E}$ for every $i \in [n]$, 
so it satisfies (M1) and (M2). Moreover, the set $\hat{M} = \{(v_i, w_j) \mid (i,j) \in M'\}$ 
is a matching of size $2 \nu$ in $\hat{G}$ satisfying $(M3)$. Indeed, since $M'$ is a matching, 
the edges in $\hat{M}$ form a matching too and each edge $(i,j) \in M'$ leads to two edges 
$(v_i, w_j), (v_j, w_i) \in \hat{M}$. The lemma follows.
\end{proof}

\subsection{Reduction to Flow}

According to the previous section (in particular, \Cref{lem: MM equivalence}) given a degree sequence $d$ all we need is to find a bipartite $\nu$-matched realization of $(d,d)$ for maximum possible $\nu$. Next we describe how to construct a flow graph $G_{d,\nu}$ for any $\nu$, such that $\nu$-matched realization of $(d,d)$ exists if and only if the maximum flow in $G_{d,\nu}$ is $\sum_{i=1}^n d_i - 2\nu$. Moreover, given an integer flow of this value we describe how to construct a desired realization. 

First we describe the construction of $G_{d, \nu}$.
The node set $X$ contains nodes $x_i$ (for $i \in [n]$), corresponding to the vertices of $V$, and the source $s$ is connected to them.
Similarly, the node set $Y$ contains nodes $y_j$ (for $j \in [n]$), 
corresponding to the vertices in $W$, and these are connected to the sink $t$.
The nodes are organized into the following sets:
\begin{itemize}
\item 
Candidates for the matching: 
\begin{align*}
X_M 
& = \{x_i \mid i \in [1,2\nu]\} 
& \text{and} &&
Y_M 
& = \{y_j \mid j \in [1,2\nu]\}
~,
\end{align*}
\item 
Rest of the nodes: 
\begin{align*}
X_R 
& = \{x_i \mid i \in [2\nu+1,n]\}
& \text{and} &&
Y_R 
& = \{y_j \mid j \in [2\nu+1,n]\}
~,
\end{align*}
\item
Source and sink: \{s,t\}.
\end{itemize}
The edges are capacitated and directed from left to right, i.e., an edge $(\alpha, \beta, \gamma)$ leads from the node $\alpha$ to the node $\beta$ and can carry up to $\gamma$ units of flow. The source $s$ has edges leading to every node $x_i$, with capacity $d_i-1$ for $x_i \in X_M$ and $d_i$ for $x_i \in X_R$. Similarly, there are edges leading from every node $y_j$ to the sink $t$, with capacity $d_j-1$ for $y_j \in Y_M$ and $d_j$ for $y_j \in Y_R$. This enforces the degree constraint for the vertices $v_i$ and $w_i$ in $G$. Finally, there's a unit capacity edge $(x_i,y_j,1)$ for every $i, j \in [n]$, such that $i \neq j$ and $j \neq 2\nu - i + 1$. 
Overall, the edge set $\tilde{E}$ of the flow graph is structured as a complete $(X,Y)$ biclique minus an inverted matching and minus diagonal edges, with a source $s$ connected to $X$ and a sink $t$ connected to $Y$. Formally, $\tilde{E}$ (along with the edge weights) is defined as follows:
\begin{align*}
\tilde{E} ~=~
& \set{(s, x_i, d_i-1) \mid i \in [1,2\nu]} \cup \set{(s, x_i, d_i) \mid i \in [2\nu+1,n]} \\
& \cup \set{(x_i, y_j, 1) \mid i, j \in [1,n], ~ i\neq j, j \neq 2\nu - i + 1} \\
& \cup \set{(y_j, t, d_j-1) \mid j \in [1,2\nu]} \cup \set{(y_j, t, d_j) \mid j \in [2\nu + 1,n]}
~.
\end{align*}
The network $G_{d, \nu}$ is schematically depictured in Figure \ref{fig:flow-network}, dashed edges indicate forbidden matching and forbidden diagonal edges. The construction is justified by the following Lemma \ref{lem: matching to flow}.

\tikzset{
  roundNode/.style={
    circle, draw, minimum size=1cm, align=center, font=\Large
  },
  dots/.style={
    draw=none, fill=none, yshift=0.1cm
  },
  edge/.style={
    font=\large
  }
}

\begin{figure}[t]
\centering
\begin{tikzpicture}[
    font=\small,
    vertex/.style={circle, draw, fill=white,
                   minimum size=11mm, inner sep=1pt},
    arc/.style={-{Stealth[length=2mm]}, semithick},
    forbidden/.style={semithick, dash pattern=on 4pt off 3pt},
    capacity/.style={fill=white, inner sep=1.5pt}
]
    \node[vertex] (s) at (-2.8,0) {$s$};
    \node[vertex] (t) at ( 6.8,0) {$t$};

    \foreach \name/\idx/\yy in {
        a/1/2.8,
        b/{2\nu}/0.8,
        c/{2\nu+1}/-0.8,
        d/n/-2.8
    } {
        \node[vertex] (x\name) at (0,\yy) {$x_{\idx}$};
        \node[vertex] (y\name) at (4,\yy) {$y_{\idx}$};
    }
    
    \node at (0,3.65) {$X$};
    \node at (4,3.65) {$Y$};
    
    \foreach \xx in {0,4} {
        \node at (\xx, 1.8) {$\vdots$};
        \node at (\xx,-1.8) {$\vdots$};
    }

    \foreach \name/\degree in {a/{d_1-1},b/{d_{2\nu}-1}} {
        \draw[arc] (s) --
            node[capacity, above, sloped] {$\degree$} (x\name);
        \draw[arc] (y\name) --
            node[capacity, above, sloped] {$\degree$} (t);
    }

    \foreach \name/\degree in {c/{d_{2\nu+1}},d/{d_n}} {
        \draw[arc] (s) --
            node[capacity, below, sloped] {$\degree$} (x\name);
        \draw[arc] (y\name) --
            node[capacity, below, sloped] {$\degree$} (t);
    }

    \foreach \name in {a,b,c,d}
        \draw[forbidden] (x\name) -- (y\name);

    \draw[forbidden] (xa) -- (yb);
    \draw[forbidden] (xb) -- (ya);
\end{tikzpicture}
\caption{The flow network $G_{d,\nu}$. Solid arcs are labeled by
    their capacities. Dashed lines indicate forbidden pairs.
    The middle arcs $x_i\to y_j$, present exactly when $i\ne j$
    and $i+j\ne2\nu+1$, all have capacity $1$ and are omitted
    for clarity.}
\label{fig:flow-network}
\end{figure}

\begin{lemma}
\label[lemma]{lem: matching to flow}
There exists a $\nu$-matched bipartite graph $\hat{G} = (V, W, \hat{E})$ that realizes the $(d,d)$ 
if and only if the value of a maximum $s-t$ flow in $G_{d,\nu}$ is equal to $\sum_{i=1}^n d_i-2\nu$.
\end{lemma}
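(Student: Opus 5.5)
The plan is to prove both directions through the standard correspondence between integral flows in $G_{d,\nu}$ and bipartite subgraphs of the biclique minus the forbidden pairs. First I note that the total capacity leaving $s$ is $\sum_{i=1}^n d_i - 2\nu$; the same holds for the capacity entering $t$. Hence any flow of value $\sum_i d_i - 2\nu$ saturates every source arc and every sink arc. By the integrality theorem for maximum flow, if the maximum flow value is $\sum_i d_i-2\nu$, then an integral maximum flow $f$ exists. I will work with such an $f$ throughout.

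($\Leftarrow$) Given an integral flow $f$ of value $\sum_i d_i - 2\nu$, I define $\hat{E} = \{(v_i,w_j) \mid f(x_i,y_j)=1\} \cup \hat{M}$, where $\hat{M}=\{(v_i,w_{2\nu-i+1}) \mid i\in[2\nu]\}$. The flow edges never use a pair with $i=j$ or $i+j=2\nu+1$, so the two parts of $\hat{E}$ are disjoint and $\hat{E}$ is a simple bipartite edge set. Moreover, no edge of $\hat{E}$ joins $v_i$ to $w_i$. For $i \le 2\nu$, note that $2\nu-i+1\ne i$ because $2\nu+1$ is odd, so $\hat{M}$ contains no diagonal pair either; this gives (M2). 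Properties (M3) and (M1) then follow from flow conservation together with saturation. For $i\le 2\nu$ we have $\deg(v_i)=(d_i-1)+1=d_i$, and for $i>2\nu$ we have $\deg(v_i)=d_i$. The same computation applies on the $W$ side, since each $w_j$ with $j\le 2\nu$ is covered by exactly one edge of $\hat{M}$, namely the one from $v_{2\nu-j+1}$.

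($\Rightarrow$) Given a $\nu$-matched realization $\hat{G}$, I set $f(x_i,y_j)=1$ iff $(v_i,w_j)\in\hat{E}\setminus\hat{M}$. This is a valid arc, because (M2) excludes $i=j$ and every pair with $i+j=2\nu+1$ and $i,j\le 2\nu$ lies in $\hat{M}$ by (M3). Pairs with $i+j=2\nu+1$ automatically have both indices in $[2\nu]$, since all indices are at least $1$. I then set $f(s,x_i)$ equal to the out-flow of $x_i$, which is $d_i-1$ or $d_i$ according to whether $i\le 2\nu$, and define $f(y_j,t)$ similarly. These values respect the capacities exactly, so $f$ is a feasible flow of value $\sum_i d_i - 2\nu$. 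Since this equals the capacity of the cut around $s$, the flow is maximum.

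The main obstacle is the bookkeeping around the forbidden pairs. I must check that the inverted pairs coincide exactly with $\hat{M}$, so that removing $\hat{M}$ leaves only allowed arcs and adding it back creates no parallel or diagonal edges. I must also check that each vertex among the first $2\nu$ on each side meets exactly one edge of $\hat{M}$. Everything else is routine flow accounting.
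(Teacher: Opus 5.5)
Your proposal is correct and takes essentially the same approach as the paper. In one direction you build $\hat{G}$ from an integral maximum flow, adding the inverted matching $\hat{M}$ and using saturation plus conservation for the degrees. In the other you send one unit along each edge of $\hat{E}\setminus\hat{M}$ with saturated source and sink arcs, which is maximum because it matches the capacity of the cut around $s$.

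Your explicit checks fill small details the paper leaves implicit: that $\hat{M}$ contains no diagonal pair, that it is disjoint from the flow edges, and that every pair with $i+j=2\nu+1$ lies in $\hat{M}$, so that $\hat{E}\setminus\hat{M}$ uses only allowed arcs.
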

\begin{proof}
Suppose the maximum value of an $s-t$ flow in $G_{d, \nu}$ is $\sum_{i=1}^n d_i-2\nu$. Using integral flow theorem, we assume that the maximum flow is integral. Denote a value of the flow from a vertex $x$ to a vertex $y$ by $\FLOW(x,y)$.
Define the bipartite graph $\hat{G} = (V, W, \hat{E})$ as follows:
\begin{enumerate}
\item For each node $x_i \in X$, create a corresponding node $v_i \in V$.
\item For each node $y_j \in Y$, create a corresponding node $w_j \in W$.
\item For every $1\le i, j\le n$, if $\FLOW(x_i,y_j)=1$, then create an edge $(v_i, w_j)$ and add it to $\hat{E}$.
\item  For every $1\le i\le 2\nu$ create an edge $(v_i, w_{2\nu-i+1})$ and add it to $\hat{E}$.
\end{enumerate}

It remains to verify that $\hat{G}$ correctly realizes $(d,d)$ and is $\nu$-matched.
The degree of $v_i$ in $\hat{G}$ equals to the number of edges $(x_i,y_j)$ that carry flow if \\ $i \in [2\nu +1,  n]$ and one more than the corresponding flow if $i \in [2 \nu]$. Since $x_i$ conserves flow, and since all the edges $(s,x_i)$ are saturated, $\deg(v_i) = d_i$. Similarly, $\deg(w_j) = d_j$ for every $j \in [n]$, so $\hat{G}$ satisfies (M1). Since $G_{d,\nu}$ does not have edges $(x_i,y_i)$ for any $i \in [n]$, it follows that $\hat{G}$ does not have edges $(v_i, w_i)$, satisfying (M2). Finally, the matching $\hat{M} = \{(v_i, w_{2\nu-i+1}) \mid 1 \le i \le 2\nu\}$ was added in step 4, so $\hat{G}$ satisfies (M3).

Conversely, suppose $\hat{G} = (V, W, \hat{E})$, where $V = \{v_1, \ldots, v_n\}$ and $W = \{w_1, \ldots, w_n\}$, 
is a $\nu$-matched bipartite graph realizing $(d, d)$ with the matching $\hat{M} = \{(v_i, w_{2\nu-i+1}) \mid 1 \le i \le 2\nu\}$.
We need to define a flow function on $G_{d,\nu}$ and prove that its value 
equals $\sum_{i=1}^n d_i - 2\nu$. 
Since the sum of capacities of outgoing edges from $s$ is exactly this sum, maximum flow cannot exceed it, thus it is enough to construct a flow with the aforementioned value.

\begin{enumerate}
\item \textbf{Saturate Edges from Source and to Sink}:
\begin{align*}
\FLOW(s,x_i) & \gets 
\begin{cases}
d_i - 1 & i \in [1,2\nu], \\
d_i     & i \in [2\nu+1,n].
\end{cases}
&
\FLOW(y_j,t) & \gets 
\begin{cases}
d_j - 1 & j \in [1,2\nu], \\
d_j     & j \in [2\nu+1,n].
\end{cases}
\end{align*}
\item 
\textbf{Flow Through Edges}:
\\
For every $i \in [n]$ and $j \in [n]$ such that $(v_i, w_j) \in \hat{E} \setminus \hat{M}$, 
\\
set $\FLOW(x_i, y_j) \leftarrow 1$,
\item 
\textbf{Completing the flow function}: 
\\
Set the flows of all other edges to $0$.
\end{enumerate}

We need to argue that the defined flow is legal. 
Note that by construction, the total flow out of the source $s$ is $\sum_{i=1}^n d_i - 2\nu$ 
and the total flow into the sink $t$ is $\sum_{j=1}^n d_j - 2\nu$. It remains to verify that 
for every vertex except $s$ and $t$, the incoming and outgoing flows are equal.
\begin{itemize}
\item \textbf{Node $x_i$}: 
By construction, the incoming flow at $x_i$ is $\FLOW(s, x_i) = d_i-1$ if $x_i \in X_M$ and $\FLOW(s, x_i) = d_i$ if $x_i \in X_R$. 
As for the outgoing flow, recall that 
there are exactly $d_i-1$ indices $j$ for which $(v_i,w_j)\in \hat{E} \setminus \hat{M}$ if $i \in [2\nu]$ and $d_i$ 
such indices if $i \in [2\nu+1,n]$. 
Hence, $x_i$ conserves flow.
\item \textbf{Node $y_j$}:
By construction, the outgoing flow at $y_j$ is $\FLOW(y_j, t) = d_j-1$ if $x_j \in X_M$ and $\FLOW(y_j, t) = d_j$ if $y_j \in Y_R$. 
As for the incoming flow, recall that 
there are exactly $d_j-1$ indices $i$ for which $(v_i,w_j)\in \hat{E} \setminus \hat{M}$ if $j \in [2\nu]$ 
and $d_j$ such indices if $j \in [2\nu+1,n]$. 
Hence, $y_j$ conserves flow.
\end{itemize}
The claim follows.
\end{proof}

We conclude the section with the following result.

\begin{theorem}
\label[theorem]{thm:real-MM-fast}
There exists an $O(n^3)$ time algorithm for constructing a realization of a 
given graphic sequence $d$ that also has a matching of maximum size 
(among all possible realizations of $d$).
\end{theorem}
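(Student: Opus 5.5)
The algorithm runs the pipeline of this section, and the proof bounds the cost of each stage. The correctness chain is already in place. \Cref{lem: matching to flow} turns a maximum flow of value $\sum_i d_i-2\nu$ in $G_{d,\nu}$ into a $\nu$-matched realization of $(d,d)$. \Cref{lem: MM equivalence}, with the rounding procedure (Steps~\ref{step: 1}--\ref{step: 6}) and \Cref{obs: G1/2 is even graph for MM,obs: even number of odd cycles for MM,obs: right x and y for MM,obs: G^omega realizes d for MM}, turns that into a realization of $d$ with a matching of size $\nu$. Taking $\nu=\match(d)$ makes the matching maximum, and \Cref{lem: MM equivalence} guarantees the required flow value is attained. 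So the remaining work is the running time.

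The cheap stages are as follows.
\begin{enumerate}
\item Computing $\nu=\match(d)$ takes $O(n\log n)$ time by binary search over Kundu's criterion with Erd\H{o}s--Gallai checks.
\item Building $G_{d,\nu}$ takes $O(n^2)$ time: it has $2n+2$ nodes and $O(n^2)$ arcs.
\item The rounding stage also takes $O(n^2)$ time. Forming $\omega$ and $\Ghalf$ is $O(n^2)$, since $\Ghalf$ has at most $\binom n2$ edges. An Euler-tour decomposition of each component of $\Ghalf$ is linear in its edge count. The even-cycle and odd-cycle eliminations touch each edge of $\Ghalf$ a constant number of times, plus one extra edge $(x,y)$ per pair of odd cycles.
\item Finding $x\in C$, $y\in C'$ as in \Cref{obs: right x and y for MM} is $O(1)$ per pair, because the proof of \Cref{obs:odd-cycle-condition} is constructive: fix any $x$ and try at most two candidates $y$.
\end{enumerate}

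The main obstacle is the max-flow stage: we must compute an integral max flow in $G_{d,\nu}$ within $O(n^3)$. A generic algorithm on $V=O(n)$ nodes and $E=O(n^2)$ arcs fits. Push--relabel with FIFO selection runs in $O(V^3)=O(n^3)$, and Dinic's algorithm in $O(V^2E)$ would be too slow in general. A simpler argument also works here because the middle arcs have unit capacity. The flow value is at most $\sum_i d_i\le n^2$, and each augmenting path in Ford--Fulkerson costs $O(E)=O(n^2)$, which gives only $O(n^4)$. Instead, Dinic on a unit-capacity middle layer has $O(\sqrt{E})$ or $O(V^{2/3})$ phases, each costing $O(E)$, for $O(n^{2+2/3})$, comfortably within the bound. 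I would cite FIFO push--relabel ($O(n^3)$) as the cleanest option. Integrality of the computed flow follows from the integral flow theorem, as already used in \Cref{lem: matching to flow}. Summing the stages gives $O(n^3)$ overall.
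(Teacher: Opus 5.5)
Your proposal is correct and follows the same route as the paper: the same stage-by-stage cost accounting ($O(n\log n)$ for $\nu$, and $O(n^2)$ each for building $G_{d,\nu}$, extracting $\hat{G}$, and every rounding step), with the max-flow computation as the only $O(n^3)$ step. The paper invokes Orlin's $O(nm)$ algorithm where you invoke FIFO push--relabel in $O(V^3)$, which is equally valid. Your Dinic aside would need an adapted phase-count argument, since the source and sink arcs are not unit-capacity, but you do not rely on it.
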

\begin{proof} 
As was shown in the beginning of this section a maximum size of a matching $\nu$ can be found in a running time of 
$O(n \log n)$.
    As for the realization algorithm, given  $\nu$, to find a $\nu$-matched realization of $d$ we first build the flow graph $G_{d, \nu}$ and find a maximum $s$-$t$ flow in it. 
    Since $G_{d, \nu}$ has $O(n)$ vertices and $O(n^2)$ edges, 
    a maximum flow can be computed in $O(n^3)$ time using the maximum flow algorithm by Orlin~\cite{Orlin13}.
    Given a maximum flow, adjacency matrix for $\nu$-matched bipartite realization $\hat{G}$ for the sequence pair $(d,d)$ can be computed in $O(n^2)$ time. 
    Next we go through all the steps of transformation of $\hat{G}$ into $\nu$-matched realization $G$ for $d$ and show that each step takes $O(n^3)$ time.

    \begin{enumerate}[{Step} 1:]
    \item The adjacency matrix of the weighted graph $G^\omega$ can be computed from the adjacency matrix of $\hat{G}$ in $O(n^2)$.
    
    \item It takes $O(n^2)$ to compute the adjacency matrix of $\Ghalf$ using the adjacency matrix of $G^\omega$.
    
    \item Since an Eulerian cycle can be found by linear time in the number of edges and $\Ghalf$ is an even graph, a partition of $\Ghalf$ into disjoint cycles $\calC$ takes $O(n^2)$ time.
        
    \item Separating even cycles in $\calC$ and applying the corresponding modification to them takes $O(n^2)$ time.
          
    \item For every pair of odd cycles $(C, C')$ it takes constant time to find vertices $x \in C$ and $y \in C'$ satisfying \Cref{obs: right x and y for MM} following the proof of \Cref{obs:odd-cycle-condition}. Given $x$ and $y$ it takes linear time $O(|C| +|C'|)$ to apply corresponding modifications. Therefore, the complexity for all pairs of cycles is $O(n^2)$. 
          
    \item At this step $G^\omega$ does not have edges of weight $1/2$, so it can be transformed into $G$ in $O(n^2)$ time.
    \end{enumerate}
The theorem follows.
\end{proof}
\section{Characterization}
\label{sec: characterization}

This section considers the characterization of $\match(d)$ of a given sequence $d$. 
A characterization for this problem was recently given by Erd\H{o}s et al.~{\cite{EKMT24}} and contains $O(n)$ inequalities
(see \Cref{thm:erdos}). 
In previous Sections the realization problem was reduced to a flow problem using the method 
developed by Fulkerson et al.~{\cite{FHM1965}}. 
We show that this technique not only leads to constructive algorithm, but also to an alternative proof of the characterization and show that one of its inequalities is redundant.


\begin{theorem}[Theorem 2.13,~\cite{EKMT24}]
\label[theorem]{thm:erdos}
    For a graphic positive non-increasing degree sequence $d = (d_1, \ldots, d_n)$ 
    and an integer $\nu$ let
    \begin{align*}
        t_d(\nu) & = |\{i>2\nu \mid d_i = d_{2\nu}\}| - |\{i<2\nu \mid d_i = d_{2\nu}\}|, \\
        d'_i &= 
    \begin{cases}
    d_i - 1 & i \leq 2\nu, \\
    d_i     & i > 2\nu.
    \end{cases}
    \end{align*}
    There exists a realization of $d$ with a matching of size $\nu$ if and only if:
    \begin{align*}
        \sum_{i=1}^k d_i &\le k^2 + \sum_{i=k+1}^n \min (d'_i, k), ~ 1 \le k < \nu, ~\\
        \sum_{i=1}^k d_i +|\{i > 2\nu \mid d_i = d_{2\nu} \} | & \le k^2 + \sum_{i=k+1}^n \min (d_i -  I_{d_i = d_{2\nu}}, k), ~ k = 2\nu + t_d(\nu)~.
    \end{align*}
\end{theorem}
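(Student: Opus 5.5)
Via the Inverted Prefix Lemma (\Cref{lem: MM prefix}), \Cref{lem: MM equivalence} and \Cref{lem: matching to flow}, $d$ has a realization with a matching of size $\nu$ if and only if the maximum $s$-$t$ flow in $G_{d,\nu}$ equals $\sum_i d_i - 2\nu$. Equivalently, the bipartite bi-sequence $(d',d')$ has a realization inside the complete bipartite graph minus the forbidden pairs $\{(x_i,y_i)\}\cup\{(x_i,y_{2\nu-i+1}) \mid i\in[2\nu]\}$. I would derive the inequalities from max-flow/min-cut. A cut is determined by a set $S\subseteq X$ kept on the source side and a set $T\subseteq Y$ on the sink side. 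Its capacity is
\[
\sum_{x_i\notin S} d'_i + \sum_{y_j\notin T} d'_j + e(S,T),
\]
where $e(S,T)$ counts the allowed pairs between $S$ and $T$. Flow value $\sum d'_i$ is attained iff for every $S,T$ we have $\sum_{i\in S} d'_i \le \sum_{j\notin T} d'_j + e(S,T)$. This is a Gale--Ryser-type condition with forbidden pairs.

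The key steps, in order, are as follows. \emph{(i) Symmetrize.} Standard FHM reasoning reduces to $S$ and $T$ indexed by the same set $K$. One may then replace $\sum_{j\notin T} d'_j$ by $\sum_{i\notin K}\min(d'_i,\,\#\text{allowed neighbours in }K)$, by choosing $T$ optimally vertex by vertex. \emph{(ii) Restrict to prefixes.} Since $d$ is non-increasing and the forbidden structure is "one diagonal plus one inverted diagonal", an exchange argument shows the worst $K$ can be taken to be a prefix $[k]$, up to ties among equal degrees. \emph{(iii) Compute $e([k],[k])$ and the neighbour counts.} For $k\le 2\nu$ the pairs inside $[k]$ lose the diagonal and those inverted pairs with both ends $\le k$. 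This gives roughly $k^2-k-\max(0,2k-2\nu)$, and after adding back the $k$ matching units one recovers the form $\sum_{i\le k} d_i\le k^2+\sum_{i>k}\min(d'_i,k)$. For $k<\nu$ no inverted pair lies inside $[k]$, and every vertex outside $[k]$ loses at most the edge to its mirror, giving the first family. \emph{(iv) Show the remaining cuts are redundant.} For $\nu\le k$ (other than the tie-adjusted $k=2\nu+t_d(\nu)$), I would show the inequality follows from the Erd\H{o}s--Gallai inequalities for $d$ (as $d$ is graphic) together with the first family. The inverted pairs inside $[k]$ are then exactly compensated by the $-1$'s in $d'$.

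The main obstacle is step (ii) combined with the special index. Near $k=2\nu$, ties $d_i=d_{2\nu}$ let one reorder vertices across the boundary $2\nu$. The critical cut then takes $K$ as $[2\nu]$ extended, or shrunk, by the block of equal degrees. This produces the correction $|\{i>2\nu \mid d_i=d_{2\nu}\}|$ and the indicator $I_{d_i=d_{2\nu}}$, and the optimum lands at $k=2\nu+t_d(\nu)$. I would first handle the tie-free case, where only $k<\nu$ and $k=2\nu$ survive. I would then treat ties by a perturbation/exchange argument inside the block of equal degrees, moving the $-1$ of $d'$ between equal-degree vertices. This is legitimate because \Cref{lem: MM prefix} allows any choice of the $2\nu$ matched vertices among equal degrees. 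Along the way I would check which of the EKMT24 inequalities never becomes the binding cut; I expect that one, likely the $k=\nu$ boundary case, to be the redundant one.
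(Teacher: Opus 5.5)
Your framework is the paper's own: the network $G_{d,\nu}$, max-flow/min-cut, and choosing the sink side vertex by vertex. After reducing to disjoint $A,B$, this gives exactly the condition $f(A)=h(A)+g(A)\le 0$ for all $A\subseteq[n]$. Here $h$ is the Erd\H{o}s--Gallai deficiency and $g(A)$ counts the $j\in[2\nu]$ with $j,2\nu+1-j\notin A$ and $d_j\le|A|$.

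\textbf{Step (ii) is the gap.} It carries essentially all the weight and is only asserted. A degree-monotone exchange does not work as stated, because $g$ depends on where the elements of $A$ sit relative to the mirror pairing, not only on their degrees.

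\emph{Swaps can lower $g$.} Swapping a vertex of $A$ for one of larger degree can lower $g$, since the incoming vertex may cancel its mirror's contribution. You then need $h$ to rise by enough to compensate. The paper can guarantee this only after showing, via \emph{minimality} of a violating set $A_*$, that every degree in $A_*$ is at least $|A_*|$.

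\emph{Ties are not handled.} ``Prefix up to ties'' settles nothing. Inside a block of equal degrees, $h$ cannot see which members are chosen but $g$ can. After the paper's normalisations, the worst choice is the \emph{tail} $[p-l+1,p]$ of the block, not a prefix. Your tie perturbation (moving the $-1$ of $d'$ among equal-degree vertices) yields an isomorphic network. It therefore only permutes the cut conditions and removes none of them.

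The paper closes this with a minimal-counterexample argument that has no counterpart in your plan:
\begin{itemize}
\item a parity argument excludes $d_k=k$;
\item $A_*$ is moved to $[q]\cup[p-l+1,p]$ and grown to $A_t$ using the increment bounds on $g$ and $h$;
\item this ends either at the full block prefix $[p]$, which the prefix case covers, or in a contradiction with $f(A_l)-f(A_{l-1})\ge 1$.
\end{itemize}

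\textbf{Steps (iii)--(iv) misplace the binding constraints.} For every prefix $[k]$ with $k\ge\nu$, each pair $\{j,2\nu+1-j\}$ meets $[k]$. So $g([k])=0$ and the cut is just Erd\H{o}s--Gallai for $d$. There is no exception at $k=2\nu+t_d(\nu)$. With no ties at $2\nu$, the second EKMT24 inequality (with $k^2$ on the right) is weaker by $2\nu$ than Erd\H{o}s--Gallai at $k=2\nu$. So the superfluous inequality is the second one, which is exactly what the paper proves, and not a ``$k=\nu$ boundary case''.

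Conversely, for $k<\nu$ the prefix cut is not the first-family inequality; it can be strictly stronger. The indices $j\in[2\nu-k+1,2\nu]$ have only $k-1$ allowed neighbours in $[k]$. This creates a gap of $|\{j\in[2\nu-k+1,2\nu]\mid d_j>k\}|$, so ``loses at most the edge to its mirror'' points the wrong way for sufficiency. You need the paper's case split:
\begin{itemize}
\item if some $j\in[k+1,2\nu-k]$ has $d_j\le k$, monotonicity gives $d_j\le k$ on $[2\nu-k+1,2\nu]$, and the two inequalities coincide;
\item otherwise $g([k])=0$, and graphicality of $d$ suffices.
\end{itemize}

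Lastly, necessity of the second inequality needs no critical-cut analysis. It is literally the Erd\H{o}s--Gallai inequality at position $2\nu+t_d(\nu)$ for the non-increasing rearrangement of $d'$. That $d'$ is graphic once you delete the matching on the $2\nu$ highest-degree vertices, which the arbitrary prefix lemma provides.
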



For a degree sequence $d$, \Cref{lem: matching to flow,lem: MM equivalence} imply that $\match(d) \ge \nu$ if and only if the maximum $s$-$t$ flow in $G_{d,\nu}$ has value $\sum_{i=1}^n d_i-2\nu$. Let $c(S,T)$ denote the capacity of an $s$-$t$ cut $(S,T)$ in this network. The total capacity leaving $s$ is $\sum_{i=1}^n d_i-2\nu$. Hence, by the max-flow min-cut theorem, the required flow exists if and only if
$c(S,T) \ge \sum_{i=1}^n d_i-2\nu$ for every $s$-$t$ cut.
We show that these cut constraints reduce to the first family of inequalities in \Cref{thm:erdos}. Thus, the additional inequality in that characterization is redundant.

\begin{theorem}
\label[theorem]{thm: characterization}
Let $d$ be a graphic sequence, 
and let $\nu \in [\floor{n/2}]$. 
Define $d'_i = d_i-1$ for $i \in [2\nu]$
and $d'_i = d_i$ for $i \in [2\nu+1,n]$.
There exists a realization of $d$ with a matching of size $\nu$ if and only if
\begin{align}
\label{eq: characterization}
\sum_{i=1}^k d_i 
& \le k^2 + \sum_{i=k+1}^n \min(d'_i,k),
& 1\le k<\nu ~.
\end{align}
\end{theorem}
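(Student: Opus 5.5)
The plan is to work entirely in the flow network $G_{d,\nu}$. By \Cref{lem: MM equivalence,lem: matching to flow} and max-flow min-cut, a realization with a matching of size $\nu$ exists if and only if every $s$-$t$ cut has capacity at least $\sum_i d'_i=\sum_i d_i-2\nu$.

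\textbf{Step 1: reduce to one inequality per set $A$.} Write a cut as $S=\{s\}\cup\{x_i : i\in A\}\cup\{y_j : j\in B\}$. Its capacity is
\[
\sum_{i\notin A} d'_i+\sum_{j\in B} d'_j + e(A,\overline{B}),
\]
where $e(A,\overline{B})$ counts allowed unit arcs from $A$ to $Y\setminus B$. For fixed $A$, each $j$ independently picks the cheaper side. So the cut condition becomes, for every $A\subseteq[n]$,
\[
\sum_{i\in A} d'_i \;\le\; \sum_{j=1}^n \min\bigl(d'_j,\; |A|-\phi_A(j)\bigr).
\]
Here $\phi_A(j)\in\{0,1,2\}$ counts the forbidden pairs between $j$ and $A$: the diagonal pair if $j\in A$, and the inverted-matching pair if $j\le 2\nu$ and $2\nu-j+1\in A$.

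\textbf{Step 2: reduce to prefixes.} I would use an exchange argument in the style of Erd\H{o}s--Gallai/\cite{FHM1965}. Swapping an element of $A$ for a smaller index increases the left side by at least as much as it can change the right side, since $d'$ is non-increasing except for the drop by one at position $2\nu$. The one case needing care is the inverted-matching partners, which can make $\phi_A$ smaller for non-prefix sets. I expect this to need a short case analysis comparing $A$ with $[k]$, for $k=|A|$, vertex by vertex.

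\textbf{Step 3: translate the prefix inequality for $k<\nu$.} For $A=[k]$ with $k<\nu$, the partners of $[k]$ are $[2\nu-k+1,2\nu]$, which is disjoint from $[k]$. So $\phi_{[k]}(j)=1$ for $j\in[k]\cup[2\nu-k+1,2\nu]$ and $0$ otherwise. For $j\le k$ we have $\min(d_j-1,k-1)$. Here I would first show that for the extremal $k$ one may assume $d_k\ge k$, and otherwise reduce to a smaller $k$. This makes those terms equal to $k-1$. Summing gives $\sum_{i\le k} d_i - k \le k(k-1)+\dots$, and the partner terms $\min(d_j-1,k-1)$ versus $\min(d'_j,k)$ coincide up to the same $-1$ bookkeeping. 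The result should be exactly \eqref{eq: characterization}.

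\textbf{Step 4: show prefixes with $k\ge\nu$ are automatic.} This is the main obstacle: proving that prefix cuts with $k\ge\nu$ never bind, which is precisely what makes the extra inequality of \Cref{thm:erdos} redundant. There the two index ranges overlap, so $\phi$ takes value $2$ on $[2\nu-k+1,k]$. I would try to derive the inequality from the Erd\H{o}s--Gallai inequality for $d$ at index $k$, which holds because $d$ is graphic. The $-1$'s subtracted on the left over $[k]\cap[2\nu]$ should compensate the losses from $\phi$ on the right. If a direct comparison fails, I would fall back on the case $k=2\nu+t_d(\nu)$ and check that the second inequality of \Cref{thm:erdos} follows from Erd\H{o}s--Gallai together with \eqref{eq: characterization} at $k=\nu-1$.
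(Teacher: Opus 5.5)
\textbf{There is a genuine gap: Step~2 is where the real difficulty lies, and the exchange you propose for it is false.}

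Your Step~1 is correct. It is the paper's cut reformulation, with the minimum over $B$ taken directly. Your Step~2 claims that swapping $i\in A$ for a smaller index never decreases the violation. Suppose $i'<i$ and $d_{i'}=d_i$. Replacing $i$ by $i'$ leaves the left side unchanged. However, it moves the inverted-matching penalty from the partner $2\nu+1-i$ to the partner $2\nu+1-i'$. The first partner is earlier and has larger degree, so there the capacity term $|A|-\phi_A(j)$ binds. The second is later and has smaller degree, so there $d'_j$ binds anyway. Hence the right side can go up.

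For example, take $d=(3,3,2,2,2,1,1)$ and $\nu=3$. Your inequality reads $2\le 5$ for $A=\{1\}$ but $2\le 4$ for $A=\{2\}$. So within a block of equal degrees, later indices can give a strictly tighter inequality, and the prefix $[k]$ need not be the tightest set of size $k$. At the boundary it is worse: if $d_{2\nu}=d_{2\nu+1}$, swapping $2\nu+1$ for $2\nu$ even decreases the left side. So comparing $A$ with $[|A|]$ vertex by vertex does not go through. The monotone exchange is valid only when the degrees differ strictly.

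The paper needs a different argument:
\begin{enumerate}
\item Take a violating set $A_*$ of minimum size $k$ and show its degrees are at least $k$.
\item Exchange in strictly larger degrees, where $h$ gains at least $1$ and $g$ loses at most $1$.
\item Show $d_k>k$ by a parity argument: if $d_k=k$, then $h(A_*)$ and $g(A_*)$ are both even.
\item Push the elements chosen from the tie block $[q+1,p]$ to the \emph{end} of the block.
\item Grow the set through the block, using minimality ($f(A_{l-1})\le 0<f(A_l)$) to show $f$ does not decrease.
\end{enumerate}
The prefix that ends up violated is $[p]$, the whole tie block, whose length is generally not $k$. The alternative $d_k<p$ contradicts the marginal bound.

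You have also misplaced the obstacle in Steps~3--4. In your form, the minimum is over all $B$, including $B\cap A\neq\emptyset$. There the prefix inequality for $k\ge\nu$ is not a consequence of Erd\H{o}s--Gallai at $k$. For $d=(3,1,1,1)$, $\nu=2$, $k=2$, Erd\H{o}s--Gallai at $k=2$ holds with equality, yet your inequality for $A=[2]$ reads $2\le 1$.

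If you first reduce to disjoint $A,B$, as the paper does, two things change. The self-contribution of $A$ becomes exactly $|A|(|A|-1)$. The matching correction becomes the count $g(A)$ of indices $j\in[2\nu]$ with $j,2\nu+1-j\notin A$ and $d_j\le |A|$. Then for $k\ge\nu$ every pair $\{j,2\nu+1-j\}$ meets $[k]$, so $g([k])=0$ and the prefix case follows from graphicity in one line. For $k<\nu$, the prefix inequality is a direct rewriting of \eqref{eq: characterization}, or follows from graphicity when $g([k])=0$, with no assumption on $d_k$. With that setup Steps~3--4 are routine, and the real work is the non-prefix reduction that your Step~2 leaves unproved.
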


The case $\nu=0$ is immediate. Throughout the proof, $d$, $\nu$, and $d'$ satisfy the assumptions and definitions of \Cref{thm: characterization}, and an interval with no indices denotes the empty set.
We first rewrite the cut constraints and then reduce the number of inequalities that must be checked.
For an $s$-$t$ cut $(S,T)$, define the index sets
\begin{align*}
A & = \{i\in[n]\mid x_i\in S\}
& \text{and} && 
B & = \{j\in[n]\mid y_j\in S\}
~.
\end{align*}
These sets determine the cut:
$S=\{s\}\cup\{x_i\mid i\in A\}\cup\{y_j\mid j\in B\}$, and $T$ is its complement. Summing the capacities of the edges from $S$ to $T$ gives the following lemma.

\begin{lemma}
\label[lemma]{lem: cuts definition}
Every $s$-$t$ cut $(S,T)$ in $G_{d,\nu}$ satisfies $c(S,T) \ge \sum_{i=1}^n d_i-2\nu$ if and only if
\begin{align}
\label{eq: characterization 1}
\sum_{i\in[n]\setminus A} d'_i + \sum_{j\in B} d'_j + e(A,[n]\setminus B) 
& \ge \sum_{i=1}^n d_i-2\nu,
& \forall A,B\subseteq[n],
\end{align}
where $e(Q,R)$ denotes the number of edges from $\{x_i\mid i\in Q\}$ to $\{y_j\mid j\in R\}$ in $G_{d,\nu}$.
\end{lemma}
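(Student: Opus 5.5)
The plan is to compute the capacity $c(S,T)$ of an arbitrary $s$-$t$ cut directly, and then note that cuts and pairs $(A,B)$ with $A,B\subseteq[n]$ are in bijection. Given $A$ and $B$, the cut $(S,T)$ is determined by $S=\{s\}\cup\{x_i\mid i\in A\}\cup\{y_j\mid j\in B\}$. Conversely, every $s$-$t$ cut has this form, since $s\in S$ and $t\in T$ are forced and the other nodes are placed freely. So it suffices to show that
\[
c(S,T)=\sum_{i\in[n]\setminus A} d'_i + \sum_{j\in B} d'_j + e(A,[n]\setminus B)
\]
for every such cut. The lemma then follows by quantifying over all $A,B$.

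To prove the identity, I classify the arcs of $\tilde{E}$ that go from $S$ to $T$, using the fact that all arcs are directed $s\to X\to Y\to t$.

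First, an arc $(s,x_i)$ crosses the cut exactly when $x_i\in T$, i.e. $i\notin A$. Its capacity is $d_i-1=d'_i$ for $i\le 2\nu$ and $d_i=d'_i$ for $i>2\nu$, so these arcs contribute $\sum_{i\notin A}d'_i$.

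Second, an arc $(y_j,t)$ crosses exactly when $y_j\in S$, i.e. $j\in B$. By the same case split it contributes $\sum_{j\in B}d'_j$.

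Third, a middle arc $(x_i,y_j)$ crosses exactly when $i\in A$ and $j\notin B$. Each such arc has unit capacity and exists precisely when $i\ne j$ and $i+j\ne 2\nu+1$. Their total capacity is therefore the number of arcs of $G_{d,\nu}$ from $\{x_i\mid i\in A\}$ to $\{y_j\mid j\in[n]\setminus B\}$, which is $e(A,[n]\setminus B)$ by definition.

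No other arcs exist, and arcs directed from $T$ to $S$ do not count toward the capacity. Summing the three contributions gives the claimed formula. Substituting it into the condition $c(S,T)\ge\sum_i d_i-2\nu$ and ranging over all cuts, equivalently over all $A,B\subseteq[n]$, yields exactly \eqref{eq: characterization 1}.

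There is no real obstacle here. The only care points are checking that $d'$ matches the source and sink capacities in both index ranges, and noting that $e(\cdot,\cdot)$ already excludes the forbidden diagonal and inverted-matching pairs.
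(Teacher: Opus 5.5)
Your proposal is correct and follows the paper's own argument: the paper likewise identifies each cut with a pair $(A,B)$ via $S=\{s\}\cup\{x_i\mid i\in A\}\cup\{y_j\mid j\in B\}$ and sums the capacities of the arcs from $S$ to $T$. You spell out the three arc classes (source, sink, middle) and check that the source and sink capacities coincide with $d'$, steps the paper leaves implicit.
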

For each fixed $A$, we next minimize over $B$ to obtain a single inequality.

\begin{lemma}
\label[lemma]{lem: characterization inequalities}
Inequalities \eqref{eq: characterization 1} hold if and only if
\[
\sum_{i\in A}d_i 
\le |A|(|A|-1)+\sum_{j\in[n]\setminus A}\min(|A|,d_j) -|\{i\in[2\nu]\mid i\notin A,\ 2\nu+1-i\notin A,\ d_i\le|A|\}|
~,
\]
for every $A\subseteq[n]$.
\end{lemma}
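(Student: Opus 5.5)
Fix $A\subseteq[n]$ and minimize the left-hand side of \eqref{eq: characterization 1} over $B\subseteq[n]$. For fixed $A$ the objective separates over the indices $j$: every $j$ contributes either $d'_j$ (if $j\in B$) or $e(A,\{j\})$ (if $j\notin B$). So the minimum over $B$ is $\sum_{i\notin A} d'_i+\sum_{j\in[n]}\min\bigl(d'_j,\,e(A,\{j\})\bigr)$, and \eqref{eq: characterization 1} holds for all $A,B$ if and only if, for every $A$,
\[
\sum_{i\notin A} d'_i+\sum_{j=1}^n \min\bigl(d'_j,e(A,\{j\})\bigr)\ \ge\ \sum_{i=1}^n d_i-2\nu .
\]
Using $\sum_i d'_i=\sum_i d_i-2\nu$, this becomes $\sum_{i\in A} d'_i\le\sum_j\min(d'_j,e(A,\{j\}))$. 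The rest is a careful evaluation of $e(A,\{j\})$ and a rewriting in terms of $d$.

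From the construction of $G_{d,\nu}$, $x_i\to y_j$ is an edge exactly when $i\ne j$ and $i+j\neq 2\nu+1$. Write $\bar j=2\nu+1-j$ for $j\in[2\nu]$. Then
\[
e(A,\{j\})=|A|-[j\in A]-[\,j\le 2\nu,\ \bar j\in A\,].
\]
I would treat $j\in A$ and $j\notin A$ separately.

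For $j\in A$, I would show $\min(d'_j,e(A,\{j\}))=e(A,\{j\})$ may be assumed, or rather handle it exactly. Summing over $j\in A$ the pairs count gives $|A|(|A|-1)$ minus the number of $j\in A\cap[2\nu]$ with $\bar j\in A$. Each such $j$ has $d'_j=d_j-1$, so on the left $\sum_{i\in A} d'_i=\sum_{i\in A} d_i-|A\cap[2\nu]|$. The plan is to move the $-1$'s across and match them. For $j\in A\cap[2\nu]$ with $\bar j\in A$, the left side's $-1$ cancels the right side's $-1$ in $e$. For $j\in A\cap[2\nu]$ with $\bar j\notin A$, the leftover $+1$ must be absorbed by the term of $\bar j\notin A$. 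Indeed $\bar j\notin A$, $\bar j\in[2\nu]$, and its partner $j\in A$, so $e(A,\{\bar j\})=|A|-1$ and $d'_{\bar j}=d_{\bar j}-1$. Hence $\min(d'_{\bar j},e)=\min(d_{\bar j},|A|)-1$, and the $-1$'s pair off.

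For $j\notin A$ there are three cases.
\begin{itemize}
\item If $j>2\nu$, then $e=|A|$ and $d'_j=d_j$, so the term is $\min(d_j,|A|)$.
\item If $j\le 2\nu$ and $\bar j\in A$, the term is $\min(d_j,|A|)-1$; this $-1$ is absorbed as above.
\item If $j\le 2\nu$ and $\bar j\notin A$, the term is $\min(d_j-1,|A|)$. This equals $\min(d_j,|A|)-1$ exactly when $d_j\le|A|$, and $\min(d_j,|A|)$ otherwise. This produces the subtracted count $|\{i\in[2\nu]\mid i\notin A,\ \bar i\notin A,\ d_i\le|A|\}|$.
\end{itemize}

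The main obstacle is the terms with $j\in A$. In the minimization I will replace $\min(d'_j,e(A,\{j\}))$ by $e(A,\{j\})$. This requires arguing that either $e\le d'_j$, or that when $d'_j<e$ the inequality for this $A$ is implied by the inequality for $A\setminus\{j\}$ (or is trivially true). I would try the latter: if $d'_j<e(A,\{j\})\le|A|-1$, then removing $j$ from $A$ decreases the left side by $d'_j$ and changes the right side by at most that amount. Should this fail, I would fall back on noting that a pair $(A,B)$ with $j\in A\cap B$ is dominated by the cut with $j\in A$ and $j\notin B$, using the symmetry between $X$ and $Y$ together with the equivalence over all $A$. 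Once this is settled, collecting terms gives exactly the stated inequality for every $A$, and the converse direction follows because each step is an equivalence.
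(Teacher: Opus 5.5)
\textbf{Verdict: there is a genuine gap.} Your separable minimization over $B$ is correct. So is your bookkeeping for $j\notin A$: with $\bar j=2\nu+1-j$, the $-1$'s pair off and you recover exactly the stated right-hand side, \emph{once} $\min(d'_j,e(A,\{j\}))$ is replaced by $e(A,\{j\})$ for every $j\in A$. The gap is that this replacement is never justified. It is the whole content of the nontrivial direction, and it is not an equivalence for a fixed $A$. Take $\nu=1$, $d=(2,2,1,1,1,1)$ (so every $d'_k=1$) and $A=[6]$. The $B$-minimized inequality has slack $0$, attained at $B=[6]$, while the target inequality for this $A$ has slack $30-8=22$. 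So your closing claim that ``each step is an equivalence'' fails, and neither of your two repairs closes the gap.
\begin{itemize}
\item \emph{Direction.} To deduce the inequality for $A$ from the one for $A\setminus\{j\}$, you need $\sum_k\min(d'_k,e(\cdot,\{k\}))$ to drop by \emph{at least} $d'_j$ when $j$ is removed. Dropping by at most $d'_j$ only says that $A\setminus\{j\}$ has at least as much slack as $A$, which is the useless direction.
\item \emph{The correct-direction claim is false.} In the example take $j=3$, so $d'_3=1<e(A,\{3\})=5$. Every term $\min(d'_k,\cdot)$ equals $1$ both before and after removing $3$, so the right side does not drop at all.
\item \emph{The fallback is misstated.} Taking $j$ out of $B$ with $A$ fixed raises the cut by $e(A,\{j\})-d'_j>0$. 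That is exactly why the minimizer put $j$ into $B$.
\end{itemize}

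\textbf{What is missing.} You need the symmetry argument applied to \emph{all} of $C=A\cap B$ at once, removing $C$ from both $A$ and $B$. Put $A'=A\setminus C$ and $B'=B\setminus C$.
\begin{itemize}
\item The terms $\sum_{i\notin A}d'_i+\sum_{j\in B}d'_j$ and $\sum_{i\notin A'}d'_i+\sum_{j\in B'}d'_j$ coincide.
\item $e(A',[n]\setminus B')=e(A,[n]\setminus B)-e(C,[n]\setminus B)+e(A',C)$.
\item $e(A',C)=e(C,A')\le e(C,[n]\setminus B)$, because the allowed pairs are symmetric and $A'\subseteq[n]\setminus B$.
\end{itemize}
Hence $c(A,B)\ge c(A',B')$ with $A'\cap B'=\emptyset$. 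The inequalities for disjoint pairs are exactly your per-$A$ formula with $e(A,\{j\})$ in place of the minimum for $j\in A$, so they imply all the others. This is the paper's reduction, and the rest of your computation then goes through.

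Removing a single $j\in A\cap B$ from both sets does not suffice. The cut changes by $e(\{j\},[n]\setminus(A\cup B))-e(\{j\},(A\cap B)\setminus\{j\})$. For $A=B=[6]$ and $j=3$ in the example above this equals $-5$.
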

\begin{proof}
Since $\sum_{i=1}^n d'_i = \sum_{i=1}^n d_i-2\nu$ 
and $\sum_{i \in A} d'_i + \sum_{i \in [n] \setminus A} d'_i = \sum_{i=1}^n d'_i$,
rearranging \eqref{eq: characterization 1} gives
\begin{align*}
\sum_{i\in A} d'_i-\sum_{j\in B} d'_j
& \le e(A,[n]\setminus B),
&& \forall A,B\subseteq[n]~.
\end{align*}
We first show that it suffices to consider disjoint sets $A$ and $B$.
For arbitrary $A,B\subseteq[n]$, let $A'=A\setminus B$ and $B'=B\setminus A$. Then $A'\cap B'=\emptyset$ and
$\sum_{i\in A}d'_i - \sum_{j\in B}d'_j = \sum_{i\in A'} d'_i - \sum_{j\in B'} d'_j$.
Moreover,
\begin{align}
\label{eq: characterization 2}
e(A',[n]\setminus B')
& =   e(A,[n]\setminus B) - e(A\cap B,[n] \setminus B) + e(A',A \cap B) \notag\\
& =   e(A,[n]\setminus B) - e(A\cap B,[n] \setminus B) + e(A \cap B, A') \notag\\
& =   e(A,[n]\setminus B) - e(A\cap B,[n] \setminus(B\cup A')) \notag\\
& \le e(A,[n]\setminus B)
~,
\end{align}
where the second equality uses the symmetry $e(Q,R) = e(R,Q)$ of the allowed pairs in $G_{d,\nu}$
and the third uses the fact that $A' \cap B = \emptyset$.
Thus, \eqref{eq: characterization 1} holds for all $A,B$ if and only if
\begin{align*}
\sum_{i\in A}d'_i-\sum_{j\in B} d'_j
& \le e(A,[n]\setminus B),
&& \forall A,B\subseteq[n] \text{ such that } A \cap B=\emptyset
~.
\end{align*}
For disjoint $A,B\subseteq[n]$, the definition of $G_{d,\nu}$ gives
\begin{align*}
e(A,[n]\setminus B)
& = |A|(n-1-|B|) - |\set{i\in [2\nu] \mid i \in A \text{ and } 2\nu+1-i\in[n]\setminus B}| \\
& = |A|(n-1-|B|) - |A \cap[2\nu]| + |\set{i \in [2\nu] \mid i \in A \text{ and } 2\nu+1-i\in B}| \\
& = |A|(n-1-|B|) - |A \cap[2\nu]| + |\set{i \in [2\nu] \mid i \in B \text{ and } 2\nu+1-i\in A}|
~.
\end{align*}
Substituting this identity and the definition of $d'$ into the preceding cut inequality yields
\begin{align*}
\sum_{i\in A}d_i-|A\cap[2\nu]|-\sum_{j\in B}d_j+|B\cap[2\nu]|
& \leq |A|(n-1-|B|)-|A\cap[2\nu]| \\
& ~~~~~ + |\set{i\in [2\nu] \mid i \in B \text{ and } 2\nu+1-i\in A}|
~.
\end{align*}
Equivalently,
\[
\sum_{i\in A}d_i
\leq |A|(n-1) + \sum_{j \in B} d_j - |A||B| - |\set{j \in [2\nu] \mid j \in B \text{ and } 2\nu+1-j\notin A}|
~.
\]
It remains to minimize the right-hand side over $B\subseteq[n]\setminus A$.
Let 
\[
\varepsilon_j = 
\begin{cases}
1 & j\in[2\nu] \text{ and } 2\nu + 1 - j \notin A ~, \\
0 & \text{otherwise} ~.
\end{cases}
\]
Including $j$ in $B$ contributes $d_j-|A|-\varepsilon_j$ to the right-hand side; excluding it contributes $0$. 
Hence, minimizing the right hand side is equivalent to
\begin{align*}
& \min_{B\subseteq[n]\setminus A} \left(|A|(n-1)+\sum_{j\in B}(d_j-|A|-\varepsilon_j)\right) \\
& \qquad = |A|(n-1)+\sum_{j\in[n]\setminus A}\min(0,d_j-|A|-\varepsilon_j)\\
& \qquad = |A|(|A|-1)+\sum_{j\in[n]\setminus A}\min(|A|,d_j-\varepsilon_j)\\
& \qquad = |A|(|A|-1)+\sum_{j\in[n]\setminus A}\min(|A|,d_j) - |\set{j\in[2\nu]\mid j\notin A,\ 2\nu+1-j\notin A,\ d_j\le|A|}|
~.
\end{align*}
This is the required inequality.
\end{proof}

The first two terms on the right-hand side in \Cref{lem: characterization inequalities} 
give the usual degree-sum bound in the Erd\H{o}s--Gallai Theorem. 
The subtracted term accounts for the prescribed matching.
Accordingly, define, for every $A\subseteq[n]$,
\begin{align*}
    h(A)&=\sum_{i\in A}d_i-|A|(|A|-1)-\sum_{j\in[n]\setminus A}\min(|A|,d_j),\\
    g(A)&=|\{j\in[2\nu]\mid j\notin A,\ 2\nu+1-j\notin A,\ d_j\le|A|\}|,\\
    f(A)&=h(A)+g(A)~.
\end{align*}
Since $d$ is graphic, the degree-sum bound gives $h(A)\le0$ for every $A$.
By \Cref{lem: characterization inequalities}, the cut constraints are therefore equivalent to $f(A)\le0$ for every $A\subseteq[n]$.

\begin{lemma}
\label[lemma]{lem: characterization prefix}
Under the assumptions of \Cref{thm: characterization}, inequalities \eqref{eq: characterization} imply $f([k])\le 0$, 
for every integer $k \in [n]$.
\end{lemma}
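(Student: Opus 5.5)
The plan is to compute $g([k])$ explicitly, to discard the values of $k$ where it vanishes by the graphicness of $d$, and to compare $f([k])\le 0$ with inequality \eqref{eq: characterization} in the remaining range.

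\textbf{Step 1: the case $k\ge\nu$.} For $A=[k]$, an index $j$ contributes to $g([k])$ only if $k<j$ and $k<2\nu+1-j$, that is, $k+1\le j\le 2\nu-k$. Hence
\[
g([k]) = |\{j\in[k+1,2\nu-k]\mid d_j\le k\}|~.
\]
This interval is empty unless $2k+1\le 2\nu$, i.e., unless $k\le\nu-1$. So for $k\ge\nu$ we have $f([k])=h([k])\le 0$, since $d$ is graphic (Erd\H{o}s--Gallai). From now on assume $1\le k<\nu$, so inequality \eqref{eq: characterization} is available for this $k$.

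\textbf{Step 2: rewriting \eqref{eq: characterization}.} For $i>k$ we have $\min(d'_i,k)=\min(d_i,k)-[\,i\le 2\nu \text{ and } d_i\le k\,]$. Substituting this and using $k^2=k(k-1)+k$, inequality \eqref{eq: characterization} becomes
\[
h([k]) + |\{j\in[k+1,2\nu]\mid d_j\le k\}| \le k~.
\]
Split the counted interval $[k+1,2\nu]$ into $[k+1,2\nu-k]$ and $[2\nu-k+1,2\nu]$; the second part has exactly $k$ indices. Writing $p=|\{j\in[2\nu-k+1,2\nu]\mid d_j\le k\}|\le k$, the rewritten inequality reads
\[
f([k]) = h([k])+g([k]) \le k-p~.
\]

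\textbf{Step 3: case analysis on $p$.}
\begin{itemize}
\item If $p=k$, Step 2 gives $f([k])\le 0$ directly.
\item If $p<k$, some $j\in[2\nu-k+1,2\nu]$ has $d_j>k$. Since $d$ is non-increasing, $d_{2\nu-k+1}\ge d_j>k$, and therefore $d_j>k$ for every $j\le 2\nu-k$. Then $g([k])=0$, so $f([k])=h([k])\le 0$ again by graphicness.
\end{itemize}

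The only step requiring care is Step 3. A direct comparison leaves a slack of exactly $k$ between the two inequalities, and the observation that closes the gap is monotonicity: either the last $k$ matched indices all have degree at most $k$, or no index in the middle range does. The remaining work is the routine bookkeeping of the indicator terms in Step 2.
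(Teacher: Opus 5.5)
Your proof is correct and takes essentially the same route as the paper. Both arguments dispose of $k\ge\nu$ via $g([k])=0$ and rewrite \eqref{eq: characterization} using $\min(d_i-1,k)=\min(d_i,k)-1$ for $d_i\le k$; both then use monotonicity of $d$ to conclude that either all of $d_{2\nu-k+1},\ldots,d_{2\nu}$ are at most $k$, in which case the inequality is exactly $f([k])\le 0$, or $g([k])=0$. Your case split on $p$ is simply the contrapositive of the paper's case $g([k])>0$.
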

\begin{proof}
We consider three cases.
\begin{enumerate}[{Case} 1:]
\item If $g([k])=0$, then $f([k])=h([k])\le0$, since $d$ is graphic.

\item If $k\ge\nu$, each pair $\{j,2\nu+1-j\}$ with $j\in[2\nu]$ intersects $[k]$. 
      Thus, $g([k])=0$, and Case 1 applies.

\item Suppose $k<\nu$ and $g([k])>0$.
      Some index $j\in[k+1,2\nu-k]$ then satisfies $d_j\le k$. 
      Since $d$ is non-increasing, the last $k$ degrees $d_{2\nu-k+1},\ldots,d_{2\nu}$ are all at most $k$, as well. 
      Consequently, $g([k]) = |\{j\in[2\nu]\setminus[k]\mid d_j\le k\}|-k$.
      Using $\min(d_j-1,k)=\min(d_j,k)-1$ when $d_j\le k$, and $\min(d_j-1,k)=\min(d_j,k)$ otherwise, 
      inequality \eqref{eq: characterization} becomes
\[
\sum_{i=1}^k d_i-k(k-1) - \sum_{i=k+1}^n\min(d_i,k) + |\{j\in [2\nu] \setminus[k] \mid d_j\le k\}| - k
\leq 0
~.
\]
        This is exactly $h([k])+g([k])\le0$.
        \qedhere
    \end{enumerate}
\end{proof}

We can now prove the characterization, namely \Cref{thm: characterization}.

\begin{proof}[Proof of \Cref{thm: characterization}]
Suppose first that $d$ has a realization with a matching of size $\nu$.
By the Inverted Prefix Lemma (\Cref{lem: MM prefix}), there is a realization containing the inverted matching on the first $2\nu$ vertices. Deleting these matching edges gives a realization of $d'$.
By the Erd\H{o}s--Gallai Theorem,
\begin{align}
\sum_{i=1}^k d'_i
& \le k(k-1)+\sum_{i=k+1}^n\min(d'_i,k),
&& 1\le k<\nu
~.
\end{align}
Substituting $d'_i=d_i-1$ for $i\le k$ gives \eqref{eq: characterization}.

Conversely, suppose that \eqref{eq: characterization} holds.
By the flow equivalence and \Cref{lem: cuts definition,lem: characterization inequalities}, it suffices to prove that $f(A)\le0$ for every $A\subseteq[n]$.
Suppose, towards a contradiction, that some set $A$ satisfies $f(A)>0$. Choose such a set $A_*$ of minimum cardinality, and let $k=|A_*|$. Since all degrees are positive, $f(\emptyset)=0$, so $k\ge1$.

First, we show that $d_i\ge k$ for every $i\in A_*$.
Otherwise, choose $j\in A_*$ with $d_j<k$. Then
\begin{align*}
h(A_*\setminus\{j\})-h(A_*)
& =    -d_j + 2(k-1) - \min(d_j,k-1) + |\{i\notin A_*\mid d_i\ge k\}| \\
& \geq |\{i\notin A_*\mid d_i\ge k\}|
~.
\end{align*}
Lowering the cardinality threshold from $k$ to $k-1$ can remove from the set counted by $g$ only indices of degree $k$ outside $A_*$. Removing $j$ from $A_*$ cannot cause any further loss. Hence,
$g(A_*\setminus\{j\})-g(A_*)\ge -|\{i\notin A_*\mid d_i= k\}| \ge -|\{i\notin A_*\mid d_i\ge k\}|$.
It follows that $f(A_*\setminus\{j\})\ge f(A_*)>0$, contradicting the minimality of $A_*$.

Next, suppose that $j\in A_*$ and $i\notin A_*$ satisfy $d_i>d_j$, and let
$A'_*=(A_*\setminus\{j\})\cup\{i\}$.
Since $d_j\ge k$, we have $h(A'_*)-h(A_*)=d_i-d_j\ge1$.
Also, $d_i>k$, so $i$ is not counted by $g(A_*)$.
Replacing $j$ by $i$ can remove at most the index $2\nu+1-i$ from the set counted by $g$. Thus $g(A'_*)-g(A_*)\ge-1$, and $f(A'_*)\ge f(A_*)>0$.
Repeating this replacement, we may assume that the degrees indexed by $A_*$ are $d_1,\ldots,d_k$, 
counted with multiplicities.
In particular, $\{i\in[n]\mid d_i>d_k\}\subseteq A_*$.
Indices of degree $d_k$ may still be chosen arbitrarily, so $A_*=[k]$ need not hold.

We have $d_k\ge k$ and next show that $d_k>k$. 
Suppose that $d_k=k$. Every index outside $A_*$ then has degree at most $k$.
The indices counted by $g(A_*)$ occur in pairs $\{j,2\nu+1-j\}$, so $g(A_*)$ is even.
Moreover,
\[
h(A_*)
= \sum_{i\in A_*} d_i - \sum_{i\notin A_*} d_i - k(k-1)
= \sum_{i=1}^n d_i - 2\sum_{i\notin A_*} d_i - k(k-1)
~.
\]
Since $d$ is graphic, its degree sum is even. Thus $h(A_*)$ is even, and $f(A_*)>0$ implies $f(A_*)\ge2$.
Choose $j\in A_*$ with $d_j=k$. As in the first step,
\begin{align*}
    h(A_*\setminus\{j\})-h(A_*)
    &=-d_j+2(k-1)-\min(d_j,k-1)
      +|\{i\notin A_*\mid d_i\ge k\}|\\
    &=-1+|\{i\notin A_*\mid d_i\ge k\}|~.
\end{align*}
Also, $g(A_*\setminus\{j\})-g(A_*)\ge -|\{i\notin A_*\mid d_i\ge k\}|$.
Therefore, $f(A_*\setminus\{j\})\ge f(A_*)-1>0$, again contradicting minimality.

Let $[q]=\{i\in[n]\mid d_i>d_k\}$ and $[q+1,p]=\{i\in[n]\mid d_i=d_k\}$, and put
$l=|A_*\cap[q+1,p]|=k-q$.
We have $[q]\subseteq A_*\subseteq[p]$.
The value of $h(A_*)$ is unchanged by replacing indices within the block $[q+1,p]$.
Since every degree indexed by $A_*$ exceeds $k$,
\[
g(A_*)=|\{j\in[2\nu]\mid d_j\le k\}|
       -|\{j\in A_*\cap[2\nu]\mid d_{2\nu+1-j}\le k\}|~.
\]
The indices $j\in[2\nu]$ satisfying $d_{2\nu+1-j}\le k$ form a prefix of $[2\nu]$.
Hence, replacing $A_*\cap[q+1,p]$ by $[p-l+1,p]$ can only decrease the second cardinality.
This leaves $h(A_*)$ unchanged and cannot decrease $g(A_*)$ or $f(A_*)$.
We may therefore assume that $A_*=[q]\cup[p-l+1,p]$.

For $0\le r\le p-q$, define $A_r=[q]\cup[p-r+1,p]$, so $A_*=A_l$.
Let $t=\min(d_k-q,p-q)$. Since $d_k>k=q+l$, we have $t\ge l$.
We show that $f(A_t)\ge f(A_l)>0$ by estimating the successive differences.
For $1\le r\le t$, the new index $p-r+1$ has degree $d_k>q+r-1$ and is not counted by $g(A_{r-1})$.
Adding it can remove at most its partner $2\nu+1-(p-r+1)$ from the set counted by $g$.
Increasing the threshold from $q+r-1$ to $q+r$ can add only indices of degree $q+r$. Consequently,
\begin{align*}
    -1\le g(A_r)-g(A_{r-1})
    \le|\{j\in[n]\mid d_j=q+r\}|~.
\end{align*}
For the same range of $r$, put $m=q+r$. Every index in $A_r$ has degree at least $d_k\ge m$, and therefore
\begin{align*}
    h(A_r)-h(A_{r-1})
    &=d_k-2(m-1)+\min(d_k,m-1)
      -|\{j\notin A_r\mid d_j\ge m\}|\\
    &=d_k-(m-1)-|\{j\notin A_r\mid d_j\ge m\}|\\
    &=d_k+1-|\{j\in[n]\mid d_j\ge m\}|~.
\end{align*}
Applying the upper bound on $g(A_l)-g(A_{l-1})$ gives
$f(A_l)-f(A_{l-1})\le d_k+1-|\{j\in[n]\mid d_j\ge k+1\}|$.
Hence, for $l+1\le r\le t$,
\begin{align*}
    f(A_r)-f(A_{r-1})
    &\ge d_k-|\{j\in[n]\mid d_j\ge q+r\}|\\
    &\ge d_k-|\{j\in[n]\mid d_j\ge k+1\}|\\
    &\ge f(A_l)-f(A_{l-1})-1\\
    &\ge0~.
\end{align*}
Here $l=k-q\ge1$, and minimality gives $f(A_{l-1})\le0<f(A_l)$.
Since these values are integers, $f(A_l)-f(A_{l-1})\ge1$.
Summing the differences yields $f(A_t)\ge f(A_l)>0$.

Finally, consider two cases.
\begin{enumerate}[{Case} 1:]
    \item If $d_k\ge p$, then $t=p-q$ and $A_t=[p]$.
    By \Cref{lem: characterization prefix}, $f(A_t)=f([p])\le0$, a contradiction.
    \item If $d_k<p$, the preceding bound at $r=l$ gives
\[
f(A_l)-f(A_{l-1})
\leq d_k+1-|\{j\in[n]\mid d_j\ge k+1\}|
\leq d_k+1-p
\leq 0
~.
\]
    The second inequality holds because all $p$ indices in $[p]$ have degree at least $d_k>k$.
    This again contradicts $f(A_l)-f(A_{l-1})\ge1$.
\end{enumerate}
Thus $f(A)\le0$ for every $A\subseteq[n]$, completing the proof.
\end{proof}
\section*{AI Disclosure}
Chat-GPT-6 Astra was used to generate an initial version of the proof of Theorem \ref{thm: characterization}, which was then simplified and rewritten by us. The tool was also used to help with the drawing of Figure \ref{fig:flow-network}, for spell checking and to enhance readability throughout the paper. We verified the correctness and originality of all content including references.




\bibliographystyle{abbrv}
\bibliography{realizations}

@string{DAM = "Discrete Appl. Math."}

@string{DAM = "DAM"}

@string{DAM = "Discr.\ Appl.\ Math."}

@string{DM = "Discrete Math."}

@string{DM = "DM"}

@string{DM = "Discr.\ Math."}

@string{LNCS = "LNCS"}

@article{Anstee85,
  author       = {Richard P. Anstee},
  title        = {An Algorithmic Proof of {Tutte's} $f$-Factor Theorem},
  journal      = {J.\ Alg.},
  volume       = {6},
  number       = {1},
  pages        = {112--131},
  year         = {1985},
}

@article{Anstee90,
  author       = {Richard P. Anstee},
  title        = {Simplified existence theorems for $(g, f)$-factors},
  journal      = DAM,
  volume       = {27},
  number       = {1-2},
  pages        = {29--38},
  year         = {1990},
}

@inproceedings{BKPR26-domset,
  author       = {Amotz Bar-Noy and Igor Kalinichev and David Peleg and Dror Rawitz},
  title        = {Degree Realization with Minimum Dominating Set},
  booktitle    = {27th IPCO},
  series       = LNCS,
  volume       = {16588},
  pages        = {64--80},
  year         = {2026},
  doi          = {10.1007/978-3-032-28691-8\_5},
}

@article{Barrus16,
    title = {On realization graphs of degree sequences},
    journal = {Discrete Mathematics},
    volume = {339},
    number = {8},
    pages = {2146-2152},
    year = {2016},
    issn = {0012-365X},
    doi = {https://doi.org/10.1016/j.disc.2016.03.012},
    url = {https://www.sciencedirect.com/science/article/pii/S0012365X16300577},
    author = {Michael D. Barrus},
}

@article{BockRautenbach19,
  author       = {Felix Bock and Dieter Rautenbach},
  title        = {On matching numbers of tree and bipartite degree sequences},
  journal      = DM,
  volume       = {342},
  number       = {6},
  pages        = {1687--1695},
  year         = {2019},
}

@article{Chungphaisan78,
  author       = {V. Chungphaisan},
  title        = {Construction of hamiltonian graphs and bigraphs with prescribed degrees},
  journal      = {J. Comb. Theory {B}},
  volume       = {24},
  number       = {2},
  pages        = {154--163},
  year         = {1978},
}

@Article{EG60,
  author = 	 {Paul Erd\H{o}s and Tibor Gallai},
  title = 	 {Graphs with Prescribed Degrees of Vertices {[Hungarian]}},
  journal = 	 {Matematikai Lapok},
  year = 	 {1960},
  volume = 	 {11},
  pages = 	 {264--274},
}

@article{EKM13,
  author    = {Peter L. Erd{\H{o}}s and Zoltan Kiraly and Istvan Miklos},
  title     = {On the Swap-Distances of Different Realizations of a Graphical Degree Sequence},
  journal   = {Combinatorics, Probability and Computing},
  volume    = {22},
  pages     = {366--383},
  year      = {2013}
}

@article{EKMT24,
   title={New Results on Graph Matching from Degree-Preserving Growth},
   author={Erdős, Péter L. and Kharel, Shubha R. and Mezei, Tamás Róbert and Toroczkai, Zoltán},
   volume={12},
   number={22},
   article-number = {3518},
   journal={Mathematics},
   year={2024},
   DOI = {10.3390/math12223518}
}

@article{Fulkerson1960,
title = "Zero-one matrices with zero trace",
journal = "Pacific J.\ Math.",
volume    = {10},
pages = "831--836",
year = "1960",
author = "D.R. Fulkerson"
}

@article{FHM1965,
    title={Some Properties of Graphs with Multiple Edges},
    author={D. R. Fulkerson and A. J. Hoffman and M. H. McAndrew},
    journal={Canad.\ J.\ Math.},
    volume={17},
    pages={166--177},
    year={1965},
}

@article{GHR16,
  author       = {Michael Gentner and Michael A. Henning and Dieter Rautenbach},
  title        = {Largest domination number and smallest independence number of forests
                  with given degree sequence},
  journal      = DAM,
  volume       = {206},
  pages        = {181--187},
  year         = {2016},
}

@article{GHR18,
  author       = {Michael Gentner and Michael A. Henning and Dieter Rautenbach},
  title        = {Smallest domination number and largest independence number of graphs
                  and forests with given degree sequence},
  journal      = {J. Graph Theory},
  volume       = {88},
  number       = {1},
  pages        = {131--145},
  year         = {2018},
}

@inproceedings{GJL99,
  author    = {R. J. Gould and M. S. Jacobson and J. Lehel},
  title     = {Potentially {$G$}-graphical degree sequences},
  booktitle = {Combinatorics, Graph Theory, and Algorithms},
  volume = {1},
  pages     = {451--460},
  year      = {1999},
}

@article{Hakimi62,
    author={Seifollah Louis Hakimi},
    title={On Realizability of a Set of Integers as Degrees of the Vertices of a Linear Graph -- {I}},
    journal = {SIAM J. Appl. Math.},
    volume={10},
    number = {3},
    pages={496--506},
    year={1962},
    publisher={Society for Industrial and Applied Mathematics}
}

@Article{havel55,
  author={Havel, V\'{a}clav},
  title = 	 {A Remark on the Existence of Finite Graphs},
  journal = 	 {Casopis Pest. Mat.},
  year = 	 {1955},
  volume = 	 {80},
  pages = 	 {477--480},
}

@inproceedings{KezdyLehel96,
  author    = {Andr\'e K\'ezdy and Jen\H{o} Lehel},
  title     = {Degree sequences of graphs with prescribed clique size},
  booktitle = {Graph Theory, Combinatorics, Algorithms, and Applications},
  pages     = {535--544},
  year      = {1999},
}

@article{KleitmanW73,
author    = {Daniel J. Kleitman and D. L. Wang},
title    = {Algorithms for constructing graphs and digraphs with given valences and factors},
journal   = DM,
volume    = {6},
pages     = {79--88},
year      = {1973}
}

@article{Kundu73,
  author       = {Sukhamay Kundu},
  title        = {The $k$-factor conjecture is true},
  journal      = DM,
  volume       = {6},
  number       = {4},
  pages        = {367--376},
  year         = {1973},
}

@article{Kundu74,
  author       = {Sukhamay Kundu},
  title        = {Generalizations of the $k$-factor theorem},
  journal      = DM,
  volume       = {9},
  number       = {2},
  pages        = {173--179},
  year         = {1974},
}

@article{Kundu74a,
  author       = {Sukhamay Kundu},
  title        = {A factorization theorem for a certain class of graphs},
  journal      = DM,
  volume       = {8},
  number       = {1},
  pages        = {41--47},
  year         = {1974},
}

@inproceedings{Orlin13,
  author       = {James B. Orlin},
  title        = {Max flows in {$O(nm)$} time, or better},
  booktitle    = {45th STOC},
  pages        = {765--774},
  year         = {2013},
}

@inproceedings{Rao79,
author    = {A. R. Rao},
title     = {The clique number of a graph with a given degree sequence},
Booktitle = {ISI Lect. Notes},
year      = {1979},
volume    = {4},
pages     = {251--267},
}

@article{RaoRao72,
title = {On factorable degree sequences},
journal = {Journal of Combinatorial Theory, Series B},
volume = {13},
number = {3},
pages = {185-191},
year = {1972},
issn = {0095-8956},
doi = {https://doi.org/10.1016/0095-8956(72)90055-X},
url = {https://www.sciencedirect.com/science/article/pii/009589567290055X},
author = {A {Ramachandra Rao} and S.B Rao},
}

@article{Yin11,
  author       = {Jian-Hua Yin},
  title        = {A Rao-type characterization for a sequence to have a realization containing a split graph},
  journal      = DM,
  volume       = {311},
  number       = {21},
  pages        = {2485--2489},
  year         = {2011},
}

@article{Yin12,
  author       = {Jian-Hua Yin},
  title        = {A short constructive proof of {A. R. Rao's} characterization of potentially {$K_{r+1}$}-graphic sequences},
  journal      = DAM,
  volume       = {160},
  number       = {3},
  pages        = {352--354},
  year         = {2012},
}

\end{document}